\numberwithin{equation}{section}
\theoremstyle{definition}
\newtheorem{Def}{Definition}[section]
\newtheorem{Rm}[Def]{Remark}}
\newtheorem{Prop}[Def]{Proposition}
\newtheorem{Thm}[Def]{Theorem}
\newtheorem{Cor}[Def]{Corollary}
\newcommand{\interior}[1]{%
	{\kern0pt#1}^{\mathrm{o}}%
}
\begin{document}
\allowdisplaybreaks

\newcommand{\arXivNumber}{1809.00534}

\renewcommand{\PaperNumber}{108}

\FirstPageHeading

\ShortArticleName{Controlled Loewner--Kufarev Equation Embedded into the Universal Grassmannian}

\ArticleName{Controlled Loewner--Kufarev Equation Embedded\\ into the Universal Grassmannian}

\Author{Takafumi AMABA~$^\dag$ and Roland FRIEDRICH~$^\ddag$}

\AuthorNameForHeading{T.~Amaba and R.~Friedrich}

\Address{$^\dag$~Fukuoka University, 8-19-1 Nanakuma, J\^onan-ku, Fukuoka, 814-0180, Japan}
\EmailD{\href{mailto:fmamaba@fukuoka-u.ac.jp}{fmamaba@fukuoka-u.ac.jp}}

\Address{$^\ddag$~ETH Z\"urich, D-GESS, CH-8092 Zurich, Switzerland}
\EmailD{\href{mailto:roland.friedrich@gess.ethz.ch}{roland.friedrich@gess.ethz.ch}}

\ArticleDates{Received June 30, 2020, in final form October 22, 2020; Published online October 29, 2020}

\Abstract{We introduce the class of controlled Loewner--Kufarev equations and consider aspects of their algebraic nature. We lift the solution of such a controlled equation to the (Sato)--Segal--Wilson Grassmannian, and discuss its relation with the tau-function. We briefly highlight relations of the Grunsky matrix with integrable systems and conformal field theory. Our main result is the explicit formula which expresses the solution of the controlled equation in terms of the signature of the driving function through the action of words in generators of the Witt algebra.}

\Keywords{Loewner--Kufarev equation; Grassmannian; conformal field theory; Witt algebra; free probability theory; Faber polynomial; Grunsky coefficient; signature}

\Classification{35Q99; 30F10; 35C10; 58J65}

\section{Introduction}

C.~Loewner~\cite{Lo} and P.P.~Kufarev~\cite{Ku}
initiated a theory which was then further extended
by C.~Pommerenke~\cite{Po2}, and which shows that
given any continuously increasing family of
simply connected domains containing the origin in
the complex plane,
the inverses of the Riemann mappings associated to the domains
are described by a partial differential equation,
the so-called
{\it Loewner--{\rm (}Kufarev{\rm )} equation}
\begin{equation*}
\frac{\partial}{\partial t}
f_{t}(z)
=
z
f_{t}^{\prime} (z)
p(t,z),
\end{equation*}
where the $f_t$ are the inverses of the Riemann map and $p(z,t)$ is a function with positive real part (see Section~\ref{LK-asCLK} for details).
More recently,
I.~Markina
and
A.~Vasil'ev~\cite{MaVa10,MaVa16}
considered the so-called {\it alternate Loewner--Kufarev equation},
which describes not necessarily increasing chains of domains.

We introduce a further generalisation, namely, the class of {\it controlled Loewner--Kufarev equations}
\begin{equation*}
\mathrm{d} f_{t}(z)
=
z f_{t}^{\prime} (z)
\{
	\mathrm{d} x_{0} (t)
	+
	\mathrm{d} \xi ( \mathbf{x}, z )_{t}
\},
\qquad
f_{0} (z) \equiv z \in \mathbb{D},
\end{equation*}
where
$\mathbb{D}$
is the unit disc in the complex plane centred at zero,
$x_{0}, x_{1}, x_{2}, \ldots $
are given functions which will be called the
{\it driving functions},
$
\mathbf{x}
=
( x_{1}, x_{2}, \ldots )
$
and
$
\xi ( \mathbf{x}, z )_{t}
:=
\sum_{n=1}^{\infty} x_{n}(t) z^{n}
$.
The controlled Loewner--Kufarev equation can be transformed, after a calculation, into
\begin{equation*}
\mathrm{d} f_{t}(z)=-\sum_{n=0}^{\infty}
( L_{n}f ) (z) \mathrm{d} x_{n},
\end{equation*}
where the $L_{n} := - z^{n+1} \partial / ( \partial z )$,
$n \in \mathbb{Z}$, are the generators of
the Witt algebra, i.e., the
central charge zero Virasoro algebra, satisfying the commutation relations
\begin{equation}\label{Witt_alg}
[L_m,L_n]=(m-n)L_{m+n}.
\end{equation}
Therefore, we are going to consider an extension of~\cite{Fr10}, where the second author established and studied the role of Lie vector fields, boundary variations and the Witt algebra in connection with the Loewner--Kufarev equation.

Let us recall first some of the classical work of
A.A.~Kirillov
and
D.V.~Yuriev
\cite{KiYu}
/
G.B.~Segal
and
G.~Wilson
\cite{SW}
/
N.~Kawamoto, Y.~Namikawa, A.~Tsuchiya and Y.~Yamada~\cite{KNTY} which will be also fundamental in the present context, in particular in understanding the appearance of the Virasoro algebra with nontrivial central charge.

A.A.~Kirillov and D.V.~Yuriev~\cite{KiYu},
constructed a highest weight representation
of the Virasoro algebra,
where the representation space is given by
the space of all holomorphic sections of an analytic line bundle
over the orientation-preserving diffeomorphism group
$\mathrm{Diff}_{+} S^{1}$
of the unit circle~$S^{1}$
(modulo rotations).
They also gave an embedding of
$\big(\mathrm{Diff}_{+} S^{1} \big)/ S^{1}$
into the infinite dimensional Grassmannian.
In fact, this embedding is an example
of a construction of solutions to the KdV hierarchy
found by I.~Krichever~\cite{Kr3},
which we address in Section~\ref{Krichever_Constr}.
If we embed a~univalent function on the unit disc~$\mathbb{D}$
into the infinite dimensional Grassmannian,
by the methods of
Kirillov--Yuriev~\cite{KiYu},
Krichever~\cite{Kr3},
or
Segal--Wilson~\cite{SW},
then one needs to track the
Faber polynomials and Grunsky coefficients
associated to the univalent function.
In general, it is not easy to calculate them from the definition.
One of our main results is, however, the following.

\begin{Thm}[see Propositions~\ref{Faber-variation} and~\ref{explicit-Grunsky}]
The Faber polynomials and Grunsky coefficients
associated to solutions of the controlled Loewner--Kufarev equation
satisfy linear differential equations, and
the Grunsky coefficients can be explicitly calculated.
\end{Thm}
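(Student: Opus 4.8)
The plan rests on the reformulation already noted above: the controlled Loewner--Kufarev equation reads $\mathrm{d} f_{t}=-\sum_{n\ge0}(L_{n}f_{t})\,\mathrm{d} x_{n}$ with $L_{n}=-z^{n+1}\partial_{z}$, and each $L_{n}$ is linear, so this is a linear controlled equation for $f_{t}$ driven by the functions $x_{0},x_{1},x_{2},\dots$. I would exploit this by applying the differential $\mathrm{d}$, in the flow parameter, to each of the classical generating-function identities that \emph{define} the Faber polynomials and Grunsky coefficients of $f_{t}$, substituting the displayed equation, and then re-expressing the outcome in terms of the same generating functions. Since $L_{n}$ raises the polynomial degree by exactly $n$ while multiplication by a power of an auxiliary variable merely shifts a coefficient index, in each case the result is a \emph{graded-triangular linear} system, which is then integrated level by level by quadrature.

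For the Grunsky coefficients I would take $G_{t}(z,\zeta):=\log\dfrac{f_{t}(z)-f_{t}(\zeta)}{z-\zeta}$ (well defined for $z\ne\zeta$ by univalence), whose Taylor coefficients $\gamma_{mn}(t)$ comprise the Grunsky coefficients of $f_{t}$, with $\gamma_{00}(t)=\log f_{t}'(0)$. Applying $\mathrm{d}$ and inserting the equation gives $\mathrm{d} G_{t}=\sum_{n\ge0}\dfrac{z^{n+1}f_{t}'(z)-\zeta^{n+1}f_{t}'(\zeta)}{f_{t}(z)-f_{t}(\zeta)}\,\mathrm{d} x_{n}$; together with the elementary partial-fraction identity
\[
\frac{z^{n+1}f_{t}'(z)-\zeta^{n+1}f_{t}'(\zeta)}{f_{t}(z)-f_{t}(\zeta)}
=z^{n+1}\partial_{z}G_{t}+\zeta^{n+1}\partial_{\zeta}G_{t}+\frac{z^{n+1}-\zeta^{n+1}}{z-\zeta}
\]
this becomes the affine-linear transport equation
\[
\mathrm{d} G_{t}=-\sum_{n\ge0}\bigl(L_{n}^{(z)}+L_{n}^{(\zeta)}\bigr)G_{t}\,\mathrm{d} x_{n}
+\sum_{n\ge0}\Bigl(\,\sum_{j=0}^{n}z^{j}\zeta^{n-j}\Bigr)\mathrm{d} x_{n},\qquad G_{0}=0.
\]
Extracting the coefficient of $z^{a}\zeta^{b}$ yields the closed system
\[
\mathrm{d}\gamma_{ab}=(a+b)\,\gamma_{ab}\,\mathrm{d} x_{0}
+\sum_{p=1}^{a-1}(a-p)\,\gamma_{a-p,\,b}\,\mathrm{d} x_{p}
+\sum_{p=1}^{b-1}(b-p)\,\gamma_{a,\,b-p}\,\mathrm{d} x_{p}
+\mathrm{d} x_{a+b},
\]
which is strictly triangular for the grading by $N:=a+b$ and, within a level, decouples into scalar first-order linear ODEs of the form $\mathrm{d}\gamma_{ab}=N\gamma_{ab}\,\mathrm{d} x_{0}+(\text{lower-level data})+\mathrm{d} x_{N}$. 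Starting from $\gamma_{00}$, which integrates $\mathrm{d} x_{0}$, variation of constants solves these recursively, so each $\gamma_{ab}(t)$ is obtained explicitly by iterated quadratures from the driving functions---equivalently, read off from the signature of the driving path---which is the ``explicitly calculated'' assertion.

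The Faber polynomials would be treated the same way, via $F_{t}(w):=1/f_{t}(1/w)$, univalent on $\{\,|w|>1\,\}$, which by the chain rule satisfies $\mathrm{d} F_{t}(w)=-wF_{t}'(w)\bigl(\mathrm{d} x_{0}+\sum_{n\ge1}w^{-n}\mathrm{d} x_{n}\bigr)$. Differentiating the defining identity $\log\bigl(F_{t}(w)-\zeta\bigr)=\log\bigl(w/f_{t}'(0)\bigr)-\sum_{n\ge1}\tfrac1n\Phi_{n}^{(t)}(\zeta)\,w^{-n}$ and comparing coefficients of $w^{-K}$ yields
\[
\mathrm{d}\Phi_{K}^{(t)}=K\,\Phi_{K}^{(t)}\,\mathrm{d} x_{0}+K\sum_{n=1}^{K-1}\Phi_{K-n}^{(t)}\,\mathrm{d} x_{n}+K\,\mathrm{d} x_{K},\qquad \Phi_{K}^{(0)}(\zeta)=\zeta^{K},
\]
again graded-triangular in $K$; alternatively these equations follow from the Grunsky system through the classical identity expressing the Faber polynomials via the Grunsky matrix. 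The one genuinely laborious point is the combinatorial bookkeeping that confirms the substitution closes up and stays triangular at each graded level; and, for the adverb ``explicitly'', organising the resulting recursion into a transparent closed form---for instance as an element of the shuffle algebra on the alphabet $\{x_{0},x_{1},\dots\}$ paired with the driving path.
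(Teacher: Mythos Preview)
Your argument is correct and arrives at exactly the same differential systems as the paper, but by a genuinely different and more elementary route. The paper first establishes the evolution of the \emph{inverse} map $f_t^{-1}$ (Proposition~\ref{inverse-evol}, whose proof occupies several pages of delicate inverse-function-theorem analysis in the appendix), then obtains the Faber variation by applying Cauchy's integral formula to $Q_n(t,w)=[(f_t^{-1}(w))^{-n}]_{\le 0}$, and only afterwards derives the Grunsky evolution by differentiating the composite $Q_n(t,f_t(z))$. You bypass the inverse function entirely: you differentiate the defining generating functions $\log\frac{f_t(z)-f_t(\zeta)}{z-\zeta}$ and $\log(F_t(w)-\zeta)$ directly in $t$, and your partial-fraction identity does the algebraic work that the paper assigns to the contour integral.

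What your approach buys is brevity and transparency: the affine-linear transport structure is visible immediately, and no regularity of $t\mapsto f_t^{-1}$ is needed. What the paper's approach buys is that the evolution of $f_t^{-1}$ is of independent interest for the Grassmannian embedding (Section~\ref{Sec_LK/Gr}), where the Faber polynomials are introduced precisely as truncations of powers of $f_t^{-1}$; so the detour through Proposition~\ref{inverse-evol} is not wasted. Once either route reaches the triangular system, the explicit formula (Proposition~\ref{explicit-Grunsky}) follows identically by variation of constants and iterated integration, as you indicate.
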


In~\cite{Fr10}, the second author proposed to lift
the embedded Loewner--Kufarev equation to the
determinant line bundle over the Sato--Segal--Wilson Grassmannian
$\mathrm{Gr}(H)$,
as a natural extension of the ``Virasoro uniformisation"
approach by M.~Kontsevich~\cite{Ko} / R.~Friedrich and J.~Kalkkinen~\cite{FrKa}
to construct generalised stochastic / Schramm--Loewner evolutions~\cite{Sch}
on arbitrary Riemann surfaces, which would also yield
a connection with conformal field theory
in the spirit of \cite{KNTY, SW}.
Let us also mention the work of B.~Doyon \cite{Do1}, who uses conformal loop ensembles~(CLE), and which is related to the content of the present article.

In \cite{MaVa16},
I.~Markina
and
A.~Vasil'ev established basic parts of this program, by considering
embedded solutions to the Loewner--Kufarev equation
into the Segal--Wilson Grassmannian
and related the dynamics therein with the representation
of the Virasoro algebra, as discussed by
Kirillov--Yuriev \cite{KiYu}.
Further, they considered the tau-function associated to the embedded solution
as a lift to the determinant line bundle.
As observed and briefly discussed in~\cite{FrKa, Ko}, the generator of the stochastic Loewner equation is {\it hypo-elliptic}.

I.~Markina, I. Prokhorov and A.~Vasil'ev~\cite{MaDmVa} observed and discussed the sub-Riemannian nature of the coefficients of univalent functions.
As the second author pointed out~\cite{Fr10}, this connects with the general theory of hypo-elliptic flows, as explained in the book by~F. Baudoin~\cite{Ba}, and led him to propose a connection of the (stochastic) Loewner--Kufarev equation with rough paths.
Now, in
the theory of rough paths
(see, e.g., the introduction in \cite{LyCaLh}),
one of the central objects of consideration is
the following controlled differential equation:
\begin{equation}\label{CDE}
\mathrm{d} Y_{t} = \varphi (Y_{t}) \mathrm{d} X_{t},
\end{equation}
where $X_{t}$ is a continuous path in a normed space $V$,
called the {\it input} of (\ref{CDE}).
On the other hand,
the path $Y_{t}$ is called the {\it output} of (\ref{CDE}).
When we deal with this equation,
an important object is
the {\it signature} of the input $X_{t}$,
with values in the (extended) tensor algebra associated with $V$
and which is written in the following form:
\begin{equation*}
S(X)_{s,t}:=\big(1, X_{s,t}^{1}, X_{s,t}^{2}, \ldots , X_{s,t}^{n}, \ldots \big),\qquad s \leqslant t.
\end{equation*}
If $X_{t}$ has finite variation with respect to $t$,
then each $X_{s,t}^{n}$ is the $n$th iterated integral of
$X_{t}$ over the interval~$[s,t]$.
With this object,
a combination of the
{\it Magnus expansion}
and the
{\it Chen--Strichartz expansion theorem}
(see, e.g., \cite[Section~1.3]{Ba})
tells us that
the output $Y_{t}$ is given as the result of the action of
$S(X)_{0,t}$
applied to $Y_{0}$.
Heuristically, we may say that
a `group element'~$S(X)$ in some big `group'
acts on some element in the
(extended)
tensor algebra
$T(\hspace{-1mm}(V)\hspace{-1mm})$
which gives the output $Y_{t}$,
or it might be better to say that the vector field $\varphi$ defines
how the `group element' acts on the algebra.
In this spirit,
we would like to describe such a picture
in the context of controlled Loewner--Kufarev equations.

For this, we extract the algebraic structure
of the controlled Loewner--Kufarev equation.
If we regard the driving functions
$x_{0}, x_{1}, x_{2}, \ldots$
just as letters in an alphabet
then it turns out that
explicit expressions for
the associated Grunsky coefficients are given by
the algebra of formal power series,
where the space of coefficients is given by words over this alphabet.
It is worth mentioning that
the action of the words over this alphabet will be actually given by
the {\it negative} part of the Witt generators.
Thus the action of the signature encodes
many actions of such negative generators.
This can be used to derive a formula for
$f_{t}(z)$
as the signature `applied' to the initial data $f_{0} (z) \equiv z$
(see Theorem~\ref{Sol-by-Witt}).

Now, given a diffeomorphism of the unit circle $S^{1}$,
the solution to the associated conformal welding problem
is a solution to the dispersionless Toda lattice hierarchy~\cite{TT91, WZ00}.
Also in this case, the corresponding tau-function
is described by the (full) Grunsky coefficients
and this generates the solution via an explicit formula.
This gives us the possibility to explicitly describe
the solution to the conformal welding problem
associated to Malliavin's canonic diffusion~\cite{Ma99}
by means of a controlled Loewner--Kufarev equation; a topic to
which we intend to return elsewhere.
Since the canonic diffusion is a natural object `on' the diffeomorphism group of~$S^{1}$,
as well as the Brownian motion on a Euclidean space,
it would describe a natural universal class in the infinite-dimensional situation.

However, the story so far lets us ask
how the signature associated with the driving functions
describes the corresponding tau-function
rather than~$f_{t}$ itself.

\begin{Thm}[see Theorem~\ref{Tau-by-Witt}]
Along the solution of the controlled Loewner--Kufarev equation, the associated tau-function can be written as the determinant of a quadratic form of the signature.
\end{Thm}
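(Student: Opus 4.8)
The plan is to combine the Segal--Wilson description of the tau-function of a point of $\mathrm{Gr}(H)$ with the explicit form of the Grunsky data of the embedded solution obtained in Propositions~\ref{Faber-variation} and~\ref{explicit-Grunsky}. First I would recall how the tau-function enters: following~\cite{SW} and~\cite{MaVa16}, the embedded solution $t \mapsto W_t \in \mathrm{Gr}(H)$ is the graph over $H_+$ of a Hilbert--Schmidt operator whose matrix in the monomial bases is, up to the standard normalisation, the Grunsky matrix $\mathcal B_t = (b_{mn}(t))_{m,n\geqslant 1}$ of $f_t$; equivalently, the Faber polynomials of $f_t$ provide an admissible basis of $W_t$ whose transition matrix from $\{z^m\}$ has the Grunsky coefficients as its strictly triangular part. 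The canonical section of the determinant line bundle over the corresponding chart, evaluated at $W_t$, is then a Fredholm determinant
\begin{equation*}
\tau_t \;=\; \det\bigl( I_{H_+} + \Phi(\mathcal B_t) \bigr),
\end{equation*}
where $\Phi$ is a fixed \emph{bilinear} map assembling, in the usual block fashion, $\mathcal B_t$ together with its (conjugate) transpose into a trace-class operator (the natural instance being a variant of $\mathcal B \mapsto -\mathcal B^{*}\mathcal B$), normalised so that $\Phi(0)=0$ and hence $\tau_0 = 1$ at the base point $f_0 = \mathrm{id}$, where $\mathcal B_0 = 0$; convergence of the determinant is precisely the Grunsky inequality for $f_t$. (Equivalently, one may pass to the fermionic Fock space, where $\tau_t$ is the vacuum expectation of the ``signature action'' group element $g_t$ of Theorem~\ref{Sol-by-Witt} applied to a squeezed state whose logarithm is quadratic in the fermion creation operators with coefficients the $b_{mn}(t)$, and Wick's theorem reproduces the same block determinant.)

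Next I would feed in the explicit description of the Grunsky data. By Proposition~\ref{Faber-variation} the coefficients $b_{mn}(t)$ solve linear differential equations driven by the increments of the driving functions $x_0, x_1, x_2, \dots$; hence, by Chen's iterated-integral expansion, each $b_{mn}(t)$ is a fixed \emph{linear} functional of the signature, $b_{mn}(t) = \langle \ell_{mn}, S(x)_{0,t}\rangle$, the functionals $\ell_{mn}$ being made explicit through the action of the negative part of the Witt algebra in Proposition~\ref{explicit-Grunsky}. Substituting this into $\Phi(\mathcal B_t)$ and using bilinearity of $\Phi$ produces an operator-valued quadratic form $\mathcal Q$ on the extended tensor algebra such that
\begin{equation*}
\tau_t \;=\; \det\mathcal Q\bigl(S(x)_{0,t}\bigr),
\qquad
\mathcal Q(v) = I + (\text{bilinear in }v),
\end{equation*}
which is exactly the assertion. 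It then remains to verify the analytic points: that the pairings $\langle \ell_{mn}, S(x)_{0,t}\rangle$ converge absolutely for the class of driving functions considered, that $\mathcal B_t$ (hence $\Phi(\mathcal B_t)$) stays Hilbert--Schmidt, i.e. trace-class after composition with $\Phi$, and smooth in $t$ so that the Fredholm determinant is well defined and differentiable, and that the normalisation is compatible with the chosen trivialisation of the determinant line bundle along the flow.

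I expect the main obstacle to be the first step: pinning down the precise block and normalisation conventions so that the tau-function of the Loewner--Kufarev chart is literally $\det(I+\Phi(\mathcal B_t))$, and, above all, checking that the Faber--Grunsky generating identities force the dependence on $\mathcal B_t$ --- and therefore on the signature --- to be of degree exactly two, even though $f_t$ itself is a genuinely nonlinear, infinite-degree expression in the signature by Theorem~\ref{Sol-by-Witt}. The resolution I would argue for is structural rather than computational: unlike $f_t$, the tau-function sees only the Grunsky coefficients, each of which is linear in the signature by Proposition~\ref{explicit-Grunsky}, and each term of the determinant couples exactly two of them; the apparent nonlinearity is thereby absorbed into the determinant, and what survives is precisely a quadratic form of the signature.
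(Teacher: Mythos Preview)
Your argument rests on the premise that the Segal--Wilson tau-function is $\det(I+\Phi(\mathcal B_t))$ with $\Phi$ \emph{bilinear} in the Grunsky matrix, and that the ``quadratic'' in the statement arises from composing this bilinearity with the linearity of each $b_{-m,-n}$ in the signature. That premise is not what the paper uses: in the Segal--Wilson formalism recalled in Section~\ref{Sec_LK/Gr}, $\tau_W(\mathbf t)=\det(1+a^{-1}bA)$ with $a^{-1}b$ depending only on $\mathbf t$ and $A=A_{f_t}$ the Grunsky operator, so the expression inside the determinant is \emph{linear} in $A$. Your proposed $\Phi(\mathcal B)=-\mathcal B^{*}\mathcal B$ is a different object (of Weil--Petersson/Fredholm type) and is not the tau-function under discussion. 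With the correct linear $\Phi$, your ``linear $\times$ bilinear $=$ quadratic'' counting collapses to ``linear'', and the argument no longer yields the theorem.

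The quadratic structure the paper exhibits lives elsewhere: it is already present \emph{inside} each Grunsky coefficient. Proposition~\ref{explicit-Grunsky} does not merely say that $b_{-m,-n}(t)$ is a linear functional of the signature; it shows that each term is an iterated integral of the form $x_{k}\bigl((x_{i_p}\cdots x_{i_1})\shuffle(x_{j_q}\cdots x_{j_1})\bigr)$, i.e.\ a single letter concatenated with a shuffle of \emph{two} independent words. The proof of Theorem~\ref{Tau-by-Witt} (Appendix~\ref{app:tau-witt}) is exactly the bookkeeping that regroups these shuffles, over all $m,n$, into the residue formula
\[
(A_{f_t}h)(z)=\Bigl[\int \underset{w=0,\,u=0}{\operatorname{Res}}\Bigl(\tfrac{h'(u)}{w-z}\sum_{r,s\geq 1}\mathrm e^{(r+s)x_0(t)}\,x_{r+s}\,(w^{-r}._{w}S)\shuffle(u^{-s}._{u}S)\Bigr)\Bigr]_{0,t},
\]
with two copies of the formal signature $S=S(\xi(\mathbf x,\mathrm e^{x_0(t)}))$ shuffled together; \emph{that} is the quadratic form. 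The tau-function is then $\det(1+a^{-1}bA_{f_t})$, i.e.\ the determinant of an operator built from this $S\shuffle S$ expression. So the missing idea in your proposal is to recognise and exploit the two-sequence shuffle structure of Proposition~\ref{explicit-Grunsky} to factor $A_{f_t}$ itself as a bilinear expression in $S$, rather than trying to extract a second degree from the determinant formula.
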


Let us now summarise the structure of the paper.
In Section~\ref{Sec_CLK},
we formulate solutions~$f_{t}(z)$
to controlled Loewner--Kufarev equations.
We add also a brief review of the classical
Loewner--Kufarev equation,
and then explain how the classical one is recovered
from the controlled Loewner--Kufarev equation.
We track the variation of the Taylor-coefficients of~$f_{t}$
and also the Faber polynomials and Grunsky coefficients.
In Section~\ref{Sec_LK/Gr},
we first recall briefly basics of the Segal--Wilson Grassmannian
and Krichever's construction.
After that, we describe how a univalent function
on~$\mathbb{D}$
is embedded into the Grassmannian.
We extract the algebraic structure of
the controlled Loewner--Kufarev equation in order to obtain
Theorem~\ref{Tau-by-Witt}. In
Appendix~\ref{Appdx}, we give the proofs of Theorems~\ref{explicit-Taylor} and~\ref{Tau-by-Witt}, respectively, and of Proposition~\ref{explicit-Grunsky}.

\section{The controlled Loewner--Kufarev equation}\label{Sec_CLK}

General assumption: $\mathbb{N}$ denotes the set of all positive integers, i.e., $\{1, 2, 3, \ldots\}$, (without zero).

\subsection{Definition of solutions to controlled Loewner--Kufarev equations}\label{Def-LK}

Given functions
$x_{1}, x_{2}, \ldots \colon [0,T] \to \mathbb{C}$,
we will write
$
\mathbf{x}
:=
( x_{1}, x_{2}, \ldots )
$
and
\begin{equation*}
\xi ( \mathbf{x}, z )_{t}
:=
\sum_{n=1}^{\infty}
x_{n} (t) z^{n},
\qquad
\text{for $z \in \mathbb{C}$},
\end{equation*}
if it converges.
If $A\colon [0,T] \to \mathbb{C}$ is of bounded variation,
we write $\mathrm{d} A$ or $A( \mathrm{d} t )$
(when emphasising the coordinate $t$ on $[0,T]$) for
the associated
complex-valued Lebesgue--Stieltjes measure
on $[0,T]$,
and the total variation measure will be denoted by
$\vert \mathrm{d} A \vert$.

\begin{Def}\label{Def:solution} Let $T>0$.
Suppose that
$x_{0}\colon [0,T] \to \mathbb{R}$, as well as
$x_{1}, x_{2}, \ldots\colon [0,T] \to \mathbb{C}$,
are continuous and of bounded variation,
and $x_{0} (0) = 0$.
Let
$f_{t}\colon \mathbb{D} \to \mathbb{C}$
be conformal mappings for $0 \leqslant t \leqslant T$.
We say $\{ f_{t} \}_{0 \leqslant t \leqslant T}$ is
a {\it solution} to
\begin{equation}\label{controlled-LK}
\mathrm{d} f_{t}(z)
=
z f_{t}^{\prime} (z)
\{	\mathrm{d} x_{0} (t)
	+
	\mathrm{d} \xi ( \mathbf{x}, z )_{t}
\}, \qquad
f_{0}(z) \equiv z \in \mathbb{D}
\end{equation}
if
\begin{itemize}\itemsep=0pt
\item[(1)]
$f_{0} (z) \equiv z$ for $z \in \mathbb{D}$,

\item[(2)]
$
\sum_{n=1}^{\infty}
n
\int_{[0,T]}
\vert
	\mathrm{d} x_{n}
\vert
(t)
r^{n}
$
converges for all $r \in (0,1)$,

\item[(3)]
for each compact set $K \subset \mathbb{D}$,
the mapping
$
[0,T] \ni t \mapsto f_{t}^{\prime}\vert_{K} \in C(K)
$
is continuous
with respect to the uniform norm on $K$,

\item[(4)]
it holds that
\begin{equation*}
f_{t} (z) - z
=
\int_{0}^{t}
z f_{s}^{\prime} (z)
\big\{
	\mathrm{d} x_{0} (s)
	+
	\mathrm{d} \xi ( \mathbf{x}, z )_{s}
\big\},
\qquad
(t,z) \in [0,T] \times \mathbb{D}.
\end{equation*}
\end{itemize}
\end{Def}
In the sequel, we refer to equation (\ref{controlled-LK})
as a {\it controlled Loewner--Kufarev equation}
(with driving paths $x_{0}$ and $\mathbf{x} := (x_{1},x_{2},\ldots)$).

In joint work with T.~Murayama~\cite{AFM}, we proved that
a solution to the controlled Loew\-ner--Ku\-fa\-rev equation
is unique if it exists \cite[Theorem~3.1]{AFM}.
In the $\omega$-controlled case,
for \mbox{$\omega (0,T) < 1/2$},
a solution exists
(and hence uniquely exists), cf.~\cite[Theorem~3.2]{AFM}. More specifically, we have

\begin{Prop}[{\cite[Lemma~2.1]{AFM}}]\label{Prop:Maruyama}Under the assumptions {\rm (1)--(3)} above,
\begin{itemize}\itemsep=0pt
\item[$(i)$]
the series
$\xi (\mathbf{x}, z )_{t}$
in $z$ has convergence radius one for each $t \in [0,T]$,

\item[$(ii)$]
the family
$
\{ \xi (\mathbf{x}, z) \}_{0 \leqslant t \leqslant T}
$
of holomorphic functions on $\mathbb{D}$
is continuous in the topology of locally uniform convergence,

\item[$(iii)$]
the function
$
t \mapsto \xi ( \mathbf{x}, z )_{t}
$
is of bounded variation and satisfies
\begin{equation*}
\mathrm{d} \xi ( \mathbf{x}, z )_{t}
=
\sum_{k=1}^{\infty}
z^{k} \mathrm{d} x_{k} (t),
\end{equation*}
for each $z \in \mathbb{D}$.
\end{itemize}
\end{Prop}
Furthermore, in \cite[equations (3.1) and (3.2)]{AFM} we proved that $f_{t}^{\prime}(0) = \mathrm{e}^{x_{0}(t)} > 0.$

\begin{Def}\label{Def:univalent_sol}We say $\{ f_{t} \}_{0 \leqslant t \leqslant T}$ is a~{\it univalent solution}
to the controlled Loewner--Kufarev equation
if it is a solution to~(\ref{controlled-LK})
and $f_{t}$ is a univalent function on~$\mathbb{D}$
for each $0 \leqslant t \leqslant T$.
\end{Def}

\subsection{Loewner--Kufarev equation as a controlled Loewner--Kufarev equation}\label{LK-asCLK}

\begin{Def} Suppose that $\Omega (t) \subset \mathbb{C}$ is given
for each $0 \leqslant t \leqslant T$.
$\{ \Omega (t) \}_{0 \leqslant t \leqslant T}$
is called a~{\it Loewner subordination chain}
if
\begin{itemize}\itemsep=0pt
\item[(1)]
$0 \in \Omega (s) \subsetneq \Omega (t)$
for each $0 \leqslant s < t \leqslant T$,

\item[(2)]
$\Omega (t)$ is a simply connected domain
(i.e., open, connected and simply connected)
for each $t \in [0,T]$,

\item[(3)]
(Continuity in the sense of Carath\'eodory,
under the conditions (1) and (2)):
For each $t \in [0,T]$
and any sequence $0 \leqslant t_{n} \uparrow t$,
$\cup_{n=1}^{\infty} \Omega (t_{n}) = \Omega (t)$.
\end{itemize}
\end{Def}

For the following Definition~\ref{LC}, cf.\ specifically
{\cite[Chapter~6, Section~6.1, pp.~156--157; Chapter~2, Section~2.1, p.~35 and Lemma~2.1]{Po2}}.
\begin{Def}[\cite{Po2}]\label{LC}Let
$f_{t}\colon \mathbb{D} \to \mathbb{C}$
be given for $0 \leqslant t \leqslant T$.
Then
$\{ f_{t} \}_{0 \leqslant t \leqslant T}$
is called a~{\it Loewner chain}
if
\begin{itemize}\itemsep=0pt
\item[(1)]
$f_{t}$ is analytic and univalent on $\mathbb{D}$,
for each $0 \leqslant t \leqslant T$,

\item[(2)]$f_{t} (z)=\mathrm{e}^{t} z + a_{2} (t) z^{2} + \cdots$, for
$z \in \mathbb{D}$,

\item[(3)] $f_{s} (\mathbb{D}) \subset f_{t} (\mathbb{D})$,
for each $0 \leqslant s < t \leqslant T$.
\end{itemize}
\end{Def}

The above chains
$\{ \Omega (t) \}$
and
$\{ f_{t} \}$
are known to be in one-to-one correspondence
via the relation
$
\Omega ( \tau )
=
f_{t} ( \mathbb{D} ),
$
where $t = \log f_{\tau}^{\prime} (0)$
is a time-reparametrisation to satisfy
Definition~\ref{LC}(2)
(see \cite[Chapter~6, Section~6.1]{Po2}).

\begin{Thm}[{\cite[Theorem~6.2]{Po2}}]
Let $f_{t}\colon \mathbb{D} \to \mathbb{C}$
be given for $0 \leqslant t \leqslant T$.
Then
$\{ f_{t} \}_{0 \leqslant t \leqslant T}$
is a~Loewner chain if and only if
there exist
constants
$r_{0}, K_{0} > 0$,
and
a function $p(t,z)$,
analytic in $z \in \mathbb{D}$,
and measurable in $t \in [0,T]$
such that
\begin{itemize}\itemsep=0pt
\item[$(i)$]
for each $0 \leqslant t \leqslant T$,
the function
$
f_{t} (z)
=
\mathrm{e}^{t} z + \cdots
$
is analytic in $\vert z \vert < r_{0}$,
the mapping
$[0,T] \ni t \mapsto f_{t} (z)$
is absolutely continuous for each $\vert z \vert < r_{0}$,
and
\begin{equation*}
\vert f_{t} (z) \vert
\leqslant
K_{0} \mathrm{e}^{t},
\qquad
\text{for all
$\vert z \vert < r_{0}$
and
$t \in [0,T]$.}
\end{equation*}

\item[$(ii)$]
$\mathrm{Re} \{ p(t,z) \} > 0$,
for all
$(t,z) \in [0,T] \times \mathbb{D}$,
and
\begin{equation}\label{PomLK}
\frac{\partial}{\partial t}
f_{t}(z)
=
z
f_{t}^{\prime} (z)
p(t,z),
\end{equation}
for all $\vert z \vert < r_{0}$ and for almost all $t \in [0,T]$.
\end{itemize}
\end{Thm}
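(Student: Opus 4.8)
The plan is to establish the two implications separately, following the classical approach via normal families and the Carathéodory kernel theorem. For the ``only if'' direction, assume $\{f_t\}$ is a Loewner chain. The key observation is that for $s<t$ the map $g_{s,t} := f_t^{-1}\circ f_s$ is a well-defined univalent self-map of $\mathbb{D}_0$ fixing the origin (by Definition~\ref{LC}--(3)), with $g_{s,t}'(0) = \mathrm{e}^{s-t} < 1$. By the Schwarz lemma these transition maps form a subordination semigroup, and one checks the composition relation $g_{s,u} = g_{t,u}\circ g_{s,t}$. The function $p(t,z)$ is then obtained as the infinitesimal generator: writing $g_{t,t+h}(z) = z - h\,z\,p(t,z) + o(h)$ and using the Herglotz representation for self-maps of the disc fixing $0$, one deduces $\mathrm{Re}\,p(t,z) > 0$. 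The local analyticity in $\vert z\vert < r_0$, the bound $\vert f_t(z)\vert \leqslant K_0\mathrm{e}^t$, and absolute continuity in $t$ follow from the normalisation $f_t(z) = \mathrm{e}^t z + a_2(t)z^2 + \cdots$ together with the growth theorem for univalent functions (so one may take $r_0$ close to $1$ and $K_0$ a universal constant), and from monotonicity of $t\mapsto f_t'(0) = \mathrm{e}^t$.

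For the ``if'' direction, suppose $p(t,z)$ with $\mathrm{Re}\,p > 0$ and (\ref{PomLK}) are given. First solve the associated ODE $\dot w = -w\,p(t,w)$, $w(s)=z$, on $\vert z\vert < r_0$ to produce transition functions $\{g_{s,t}\}$; the sign of $\mathrm{Re}\,p$ guarantees these remain univalent and map $\mathbb{D}_0$ strictly inside itself, so that $f_t := f_s\circ g_{s,t}^{-1}$ (or rather the limit construction below) is consistent. The subtle point is passing from the local solution near $0$ to a genuine univalent function on all of $\mathbb{D}_0$: one uses the a priori bound $\vert f_t(z)\vert\leqslant K_0\mathrm{e}^t$ and Montel's theorem to extract locally uniform limits, and the Carathéodory kernel theorem to identify the limit domains as an increasing chain. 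Property~(2) of Definition~\ref{LC} is then forced by the normalisation built into (i), and univalency on $\mathbb{D}_0$ follows because each $f_t$ is a locally uniform limit of univalent maps with fixed linear coefficient $\mathrm{e}^t\neq 0$, hence univalent by Hurwitz.

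The main obstacle is the regularity bookkeeping in $t$: since $p$ is only assumed measurable in $t$, (\ref{PomLK}) holds only for almost every $t$, and one must work with the integrated form and Carathéodory-type solutions of the ODE, verifying absolute continuity of $t\mapsto f_t(z)$ carefully rather than appealing to classical ODE theory. A secondary technical nuisance is the interplay between the ``small disc'' $\vert z\vert < r_0$ on which everything is initially controlled and the full disc $\mathbb{D}_0$ on which univalency must ultimately be asserted; this is handled by the standard observation that for univalent $f_t$ normalised as in~(2), the Koebe distortion estimates propagate control from a fixed small disc to all compact subsets of $\mathbb{D}_0$. Since this is Pommerenke's theorem quoted verbatim, the ``proof'' here is really a pointer to \cite[Theorem~6.2]{Po2}, and I would present only the structure above, deferring the detailed estimates to that reference.
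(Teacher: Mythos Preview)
The paper does not prove this theorem at all: it is quoted verbatim from Pommerenke \cite[Theorem~6.2]{Po2} and used as background, with no argument supplied. You correctly identify this in your final paragraph, so your sketch of the classical two-implication argument via transition maps, Herglotz representation, and normal-family/Hurwitz arguments is additional content rather than a reconstruction of anything in the paper. There is nothing to compare.
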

According to the terminology in \cite{BCDV} we call the equation (\ref{PomLK}) the
{\it Loewner--Kufarev equation}
(if we regard $p(t,z)$ as given and $f_{t}(z)$ as unknown).

Because of equation (\ref{PomLK}), it holds that $p(t,0)=\lim\limits_{z \to 0} \big( \frac{\partial}{\partial t} f_{t}(z) \big)
/( z f_{t}^{\prime} (z) )= 1$,
and hence the `Herglotz representation theorem'
applies, which permits us to conclude that,
for every $t \in [0,T]$,
there exists a probability measure $\nu_{t}$
on $S^{1} = \partial \mathbb{D}$
(which is naturally identified with
$[0, 2 \pi]$ as measurable spaces,
and then the induced probability measure is still
denoted by $\nu_{t}$)
such that
\begin{equation*}
p(t,z)
=
\int_{0}^{2\pi} \hspace{-1mm}
\frac{ \mathrm{e}^{i \theta} + z }{ \mathrm{e}^{i \theta} - z }
\nu_{t} ( \mathrm{d} \theta )
\qquad
\text{for $z \in \mathbb{D}$.}
\end{equation*}
Substituting this into (\ref{PomLK}),
the Loewner--Kufarev equation becomes
\begin{equation}\label{LK} \begin{split}
\frac{ \partial f_{t} }{ \partial t } (z)
=z f_{t}^{\prime} (z)
\int_{0}^{2\pi} \hspace{-1mm}
\frac{
	\mathrm{e}^{i \theta} + z
}{
	\mathrm{e}^{i \theta} - z
}
\nu_{t} ( \mathrm{d} \theta ) .
\end{split}
\end{equation}
Assuming that
$
\nu_{t}( \mathrm{d} \theta )
=:
\nu_{t}( \theta ) \mathrm{d} \theta
$,
we write the Fourier series of $\nu_{t} (\theta )$ as
\begin{equation*}
\nu_{t} ( \theta )
=
\frac{1}{2\pi}
\left\{
a_{0}(t)
+
\sum_{k=1}^{\infty}
\big(
	a_{k}(t) \cos ( k \theta )
	+
	b_{k}(t) \sin ( k \theta )
\big)
\right\} .
\end{equation*}
We temporarily introduce the notation
$x_{0} (t) := \int_{0}^{t} a_{0} (s) \mathrm{d} s$
and
\begin{equation*}
u_{k} (t)
:=
\int_{0}^{t}
a_{k}(s)
\mathrm{d}s,
\qquad
v_{k} (t)
:=
- \int_{0}^{t}
b_{k}(s)
\mathrm{d} s,
\end{equation*}
for $k=1,2,\ldots$.
Because of the relations
\begin{equation*}
\frac{1}{2\pi}
\int_{0}^{2\pi}
\frac{
	\mathrm{e}^{ i \theta } + z
}{
	\mathrm{e}^{ i \theta } - z
}
\cos ( k \theta )
\mathrm{d} \theta
=
z^{k},
\qquad
\frac{1}{2\pi}
\int_{0}^{2\pi}
\frac{ \mathrm{e}^{ i \theta } + z }{ \mathrm{e}^{ i \theta } - z }
\sin ( k \theta )
\mathrm{d} \theta
=
- i z^{k},
\end{equation*}
for $k=1,2,\ldots$ and $\vert z \vert < 1$,
equation (\ref{LK}) assumes the following form:
\begin{equation*}
\frac{ \partial f_{t} }{ \partial t } (z)
=
z f_{t}^{\prime} (z)
\left\{
	\dot{x}_{0} (t)
	+
	\sum_{k=1}^{\infty}
	\big(
		\dot{u}_{k} (t) + i \dot{v}_{k} (t)
	\big)
	z^{k}
\right\} .
\end{equation*}
This can be rewritten as the following
controlled differential equation
\begin{equation*}
\mathrm{d} f_{t}(z)
=
z f_{t}^{\prime} (z)
\{\mathrm{d} x_{0} (t)
	+
	\mathrm{d} \xi ( \mathbf{x}, z )_{t}\},
\end{equation*}
where
$x_{k} (t) = u_{k} (t) + i v_{k} (t)$
for $k \geqslant 1$,
and
$
\xi ( \mathbf{x}, z )_{t}
=
\sum_{k=1}^{\infty} x_{k} (t) z^{k}
$.

If we omit the condition $\mathrm{Re} \{ p(t,z) \} > 0$,
that is, we allow the real part of $p(t,z)$ to have an arbitrary sign,
then equation (\ref{PomLK}) is called the
{\it alternate Loewner--Kufarev equation},
as considered by
I.~Markina and A.~Vasil'ev \cite{MaVa10}.
Intuitively, this
describes evolutions of conformal mappings
whose images of $\mathbb{D}$ are not necessary increasing, i.e., not strict subordinations. It appears that the general theory with respect to the existence and uniqueness of solutions
is not yet fully developed. However, our
controlled Loewner--Kufarev equation (\ref{controlled-LK})
deals with this alternate case because we have not assumed
that
$
p(t,z)
:=
\frac{\mathrm{d}}{\mathrm{d}t}
( x_{0} (t) + \xi ( \mathbf{x}, z )_{t} )
$
has a positive real part.

\begin{Rm}Readers focusing on radial Loewner equations
might feel puzzled by the heuristic assumption that
the Radon--Nikodym density
$
\frac{
	\nu_{t} ( \mathrm{d} \theta )
}{
	\mathrm{d} \theta
}
= \nu_{t} ( \theta )
$
exists,
because the radial Loewner equation describes the case
$
\nu_{t} ( \mathrm{d} \theta )
=
\delta_{ \mathrm{e}^{ i w(t) } } ( \mathrm{d} \theta )
$
where $w(t)$ is a continuous path in $\mathbb{R}$,
so that there does not exist a Radon--Nikodym density.
However, several explicit examples of Loewner--Kufarev equations
within this setting,
are presented with simulations in Sola~\cite{So}.
\end{Rm}

\subsection{Taylor coefficients along the controlled Loewner--Kufarev equation}

Suppose that
$x_{0}\colon [0, +\infty ) \to \mathbb{R}$,
$x_{1}, x_{2}, \ldots \colon [0, +\infty ) \to \mathbb{C}$
are continuous and of bounded variation.
Let $\{ f_{t} \}_{0 \leqslant t \leqslant T}$
be a solution to the controlled Loewner--Kufarev equation~(\ref{controlled-LK}).
We parametrise $f_{t}$ as
\begin{equation}\label{parametrization}
f_{t}(z)
=
C(t)
\big(
	z + c_{1}(t) z^{2} + c_{2}(t) z^{3} + c_{3}(t) z^{4} + \cdots
\big) ,
\end{equation}
with the additional convention that $c_{0}(t) \equiv 1$.

The dynamics of the coefficients
$(c_{1}, c_{2}, \ldots)$ has been previously
studied by Vasil'ev and his co-authors
\cite{HiMaVa, MaDmVa, MaVa10, MaVa11}.
The
(stochastic/Schramm)-Loewner
(equation/evolution)
(SLE)
case is discussed by
Friedrich~\cite{Fr10}. A complementary, conformal field theoretic perspective of the Bieberbach--de Branges theorem is given by Duplantier et al.~\cite{DNNZ}.
Within our framework, we get the following similarly:

\begin{Prop}\label{CLK-Hiera}
Let $\{ f_{t} \}_{0 \leqslant t \leqslant T}$ be a solution
to the controlled Loewner--Kufarev equation
\eqref{controlled-LK}
with the parametrisation~\eqref{parametrization}.
Then we have
\begin{equation*}
\mathrm{d} C(t)
=
C(t) \mathrm{d} x_{0} (t),
\end{equation*}
and
\begin{equation}\label{c_n}
\begin{cases}
\mathrm{d} c_{1} (t)
=
\mathrm{d} x_{1} (t)
+
c_{1} (t) \mathrm{d} x_{0} (t), \\
\mathrm{d} c_{2} (t)
=
\mathrm{d} x_{2} (t)
+ 2 c_{1} (t) \mathrm{d} x_{1} (t)
+ 2 c_{2} (t) \mathrm{d} x_{0} (t), \\
\hspace{30mm}\vdots \\
\displaystyle
\mathrm{d} c_{n} (t)
=
\mathrm{d} x_{n} (t)
+
\sum_{k=1}^{n-1}
(k+1) c_{k}(t)
\mathrm{d} x_{n-k} (t)
+
n c_{n}(t)
\mathrm{d} x_{0} (t), \qquad
\text{for $n \geqslant 2$,}
\end{cases}
\end{equation}
with the initial conditions
$C(0) = 1$ and
$c_{1} (0) = c_{2} (0) = \cdots = 0$.
In particular, $C= \{ C(t) \}_{0 \leqslant t \leqslant T}$
takes its values in $\mathbb{R}$.
\end{Prop} 
As
$
f_{t}^{\prime} (0)
=
C(t)
=
\mathrm{e}^{ x_{0} (t) - x_{0} (0) }
\neq 0
$,
we get
\begin{Cor}\label{loc-univ}Let $\{ f_{t} \}_{0 \leqslant t \leqslant T}$ be a~solution to the controlled Loewner--Kufarev equation~\eqref{controlled-LK}. Then $f_{t}$ is univalent in a neighbourhood of $0$,
for each $0 \leqslant t \leqslant T$.
\end{Cor}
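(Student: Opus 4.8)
The plan is to reduce the claim to the elementary principle that a holomorphic function with nonvanishing derivative at a point is injective on some neighbourhood of that point; the only genuine input is the value of $f_t'(0)$, which is already furnished by Proposition~\ref{CLK-Hiera}.

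First I would record the derivative at the origin. By Proposition~\ref{CLK-Hiera} the leading coefficient obeys $\mathrm{d} C(t) = C(t)\,\mathrm{d} x_0(t)$ with $C(0)=1$, whence $C(t) = \mathrm{e}^{x_0(t)-x_0(0)}$. Since $x_0$ is real-valued and continuous, $C(t)>0$ for every $t$, and in particular $f_t'(0)=C(t)\neq 0$. Recall also that $f_t$ is analytic on $\mathbb{D}_0$, so that the parametrisation~(\ref{parametrization}) converges, together with its termwise derivative, uniformly on compact subsets of $\mathbb{D}_0$.

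Next, fixing $t$, I would estimate the difference quotient directly. Writing $f_t(z)=C(t)\big(z+\sum_{n\geq 2}c_{n-1}(t)z^{n}\big)$ and using $z^{n}-w^{n}=(z-w)\sum_{j=0}^{n-1}z^{j}w^{n-1-j}$, one obtains for distinct $z,w$ with $|z|,|w|\leq r$ the identity
\begin{equation*}
\frac{f_t(z)-f_t(w)}{z-w}
=
C(t)\Big(1+\sum_{n\geq 2}c_{n-1}(t)\,\frac{z^{n}-w^{n}}{z-w}\Big),
\end{equation*}
in which each term is bounded by $|c_{n-1}(t)|\,n\,r^{n-1}$. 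The tail $\sum_{n\geq 2}|c_{n-1}(t)|\,n\,r^{n-1}$ tends to $0$ as $r\to 0$ by convergence of the termwise-differentiated series on $\{|z|\leq r\}\subset\mathbb{D}_0$. Hence there is an $r=r(t)>0$ for which the factor in parentheses has modulus at least $\tfrac12$, so the difference quotient never vanishes and $f_t$ is injective on $\{|z|<r(t)\}$.

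I do not anticipate any real obstacle here: the argument is just the standard proof of the holomorphic inverse function theorem specialised to the origin, and one could alternatively simply invoke that theorem once $f_t'(0)\neq 0$ has been established. The only point requiring a little care is the passage to the limit $r\to 0$, which relies on the local uniform convergence of the termwise-differentiated power series---a property guaranteed by the analyticity of $f_t$ on $\mathbb{D}_0$ assumed in Definition~\ref{Def:solution}. Note that the neighbourhood $\{|z|<r(t)\}$ is allowed to depend on $t$, which is all that the statement requires.
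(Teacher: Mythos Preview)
Your proof is correct and follows the same approach as the paper: both establish $f_t'(0)=C(t)=\mathrm{e}^{x_0(t)-x_0(0)}\neq 0$ via Proposition~\ref{CLK-Hiera} and then invoke the local injectivity of a holomorphic function with nonvanishing derivative. The paper simply cites this fact in one line, whereas you additionally spell out the standard difference-quotient estimate; your own remark that ``one could alternatively simply invoke that theorem'' is precisely what the paper does.
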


\begin{Thm}\label{explicit-Taylor} Let $\{ f_{t} \}_{0 \leqslant t \leqslant T}$ be a solution
to the controlled Loewner--Kufarev equation
\eqref{controlled-LK}.
Then for
each $n \in \mathbb{N}$,
the coefficient $c_{n}$ in
\eqref{parametrization}
is given by
\begin{gather*}
c_{n} (t)=
\sum_{p=1}^{n}
\sum_{
	\substack{
		i_{1}, \ldots , i_{p} \in \mathbb{N}: \\
		i_{1} + \cdots + i_{p} = n
	}
}
\widetilde{w}(n)_{i_{1}, \ldots , i_{p}}
\mathrm{e}^{ n x_{0} (t) } \\
\hphantom{c_{n} (t)=}{}\times
\int\displaylimits_{0\leqslant s_{1} < s_{2} < \cdots < s_{p} \leqslant t}
\!\!\!\!\!\!\!\!\!\mathrm{e}^{ -i_{1} x_{0} (s_{1}) }
\mathrm{d} x_{i_{1}} (s_{1})
\mathrm{e}^{ -i_{2} x_{0} (s_{2}) }
\mathrm{d} x_{i_{2}} (s_{2})
\cdots
\mathrm{e}^{ -i_{p} x_{0} (s_{p}) }
\mathrm{d} x_{i_{p}} (s_{p}),
\end{gather*}
where
\begin{equation*}
\widetilde{w}(n)_{i_{1}, \ldots , i_{p}}
:=
\big\{ (n-i_{1}) + 1 \big\}
\big\{ (n-(i_{1}+i_{2})) + 1 \big\}
\cdots
\big\{ (n-(i_{1}+i_{2}+ \cdots +i_{p-1})) + 1 \big\} ,
\end{equation*}
and $n = i_{1} + \cdots + i_{p}$.
\end{Thm}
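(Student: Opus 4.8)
The plan is to prove the identity by induction on $n$, starting from the recursion obtained from (\ref{c_n}) by variation of constants (the one displayed just above the statement),
$$
c_{n}(t)
=
\mathrm{e}^{n x_{0}(t)}\int_{0}^{t}\mathrm{e}^{-n x_{0}(s)}\,\mathrm{d}x_{n}(s)
+
\sum_{k=1}^{n-1}(k+1)\,\mathrm{e}^{n x_{0}(t)}\int_{0}^{t}\mathrm{e}^{-n x_{0}(s)}\,c_{k}(s)\,\mathrm{d}x_{n-k}(s),
\qquad n\geqslant 2,
$$
together with the $n=1$ case of (\ref{c_n}), which integrates to $c_{1}(t)=\mathrm{e}^{x_{0}(t)}\int_{0}^{t}\mathrm{e}^{-x_{0}(s)}\,\mathrm{d}x_{1}(s)$. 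This last expression is precisely the asserted formula for $n=1$: the only composition of $1$ is $(i_{1})=(1)$, the weight $\widetilde{w}(1)_{1}$ is the empty product $1$, and the single iterated integral is $\int_{0}^{t}\mathrm{e}^{-x_{0}(s_{1})}\,\mathrm{d}x_{1}(s_{1})$.

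For the inductive step one assumes the formula for $c_{1},\dots,c_{n-1}$ and substitutes it into each summand on the right. Three elementary observations then carry the argument. First, $\mathrm{e}^{-n x_{0}(s)}\cdot\mathrm{e}^{k x_{0}(s)}=\mathrm{e}^{-(n-k)x_{0}(s)}$, so the exponential prefactor produced by $c_{k}(s)$ telescopes against $\mathrm{e}^{-n x_{0}(s)}$ and leaves exactly the factor $\mathrm{e}^{-(n-k)x_{0}(s)}\,\mathrm{d}x_{n-k}(s)$ expected at a new integration time. Second, each iterated integral coming from the inductive expression for $c_{k}(s)$ runs over times confined to $[0,s]$, so appending $\mathrm{d}x_{n-k}(s)$ and integrating $s$ over $[0,t]$ lengthens a $q$-fold iterated integral over $[0,t]$ to a $(q+1)$-fold one; reading off $p=q+1$ and $i_{p}=n-k$, this turns a composition of $k$ into a composition $(i_{1},\dots,i_{p})$ of $n$, and every composition of $n$ with $p\geqslant 2$ parts arises this way from exactly one pair $(k,\text{composition of }k)$, while the length-one composition $(n)$ (weight $\widetilde{w}(n)_{n}=1$) is supplied by the first term on the right. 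Third, one collects the scalars: the term indexed by $(i_{1},\dots,i_{p})$ acquires the factor $(k+1)$ from the recursion times the weight $\widetilde{w}(k)_{\ldots}$ from the inductive hypothesis, and one checks directly from the definition of $\widetilde{w}$ that this product equals $\widetilde{w}(n)_{i_{1},\dots,i_{p}}$ — concretely, the one-step relation obtained by stripping the appropriate factor from $\{(n-i_{1})+1\}\cdots\{(n-(i_{1}+\cdots+i_{p-1}))+1\}$, where one must keep careful track of which partial sums $i_{1}+\cdots+i_{m}$ occur and of the ordering convention used in the iterated integral.

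An equivalent and perhaps more transparent route is to unroll the recursion all the way down in one pass: a full branch of the unrolling is a strictly decreasing sequence $n=m_{0}>m_{1}>\cdots>m_{p-1}\geqslant 1$, which contributes the scalar $\prod_{j=1}^{p-1}(m_{j}+1)$ and the iterated integral whose successive differentials carry the indices $m_{0}-m_{1},m_{1}-m_{2},\dots,m_{p-2}-m_{p-1}$ together with the terminal index $m_{p-1}$ (and the exponential weights as above). The assignment of such a sequence to the composition of $n$ formed by those $p$ indices is a bijection onto the compositions of $n$ into $p$ parts, under which $\prod_{j=1}^{p-1}(m_{j}+1)$ becomes $\widetilde{w}(n)_{i_{1},\dots,i_{p}}$; summing over all branches gives the claim.

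I expect the only genuine difficulty to be the combinatorial bookkeeping just described — matching the product of the factors $(k+1)$ collected along a branch with the definition of $\widetilde{w}(n)_{i_{1},\dots,i_{p}}$, and aligning the time-ordering and indexing conventions of the iterated integrals with that definition. The analytic points are routine: the interchanges of the finite sums and of the Lebesgue-Stieltjes integrals used throughout are justified by the absolute (total-variation) convergence furnished by Definition~\ref{Def:solution}--(2), and the product rule $\mathrm{d}(\mathrm{e}^{-n x_{0}}c_{n})=\mathrm{e}^{-n x_{0}}(\mathrm{d}c_{n}-n c_{n}\,\mathrm{d}x_{0})$ underlying the variation-of-constants step is valid for the continuous functions of bounded variation at hand.
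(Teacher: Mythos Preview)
Your plan---unrolling the variation-of-constants recursion, either by induction or in one pass---is precisely the approach the paper indicates (``similar to the one of Proposition~\ref{explicit-Grunsky}''), and the strategy is sound. However, the one step you flag as delicate bookkeeping does \emph{not} check out with the indexing you propose. With $i_{p}=n-k$ and $k=i_{1}+\cdots+i_{p-1}$, the identity $(k+1)\,\widetilde{w}(k)_{i_{1},\ldots,i_{p-1}}=\widetilde{w}(n)_{i_{1},\ldots,i_{p}}$ fails already for $n=3$, $(i_{1},i_{2})=(1,2)$: the left side is $(1{+}1)\cdot 1=2$ while $\widetilde{w}(3)_{1,2}=(3-1)+1=3$. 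What the recursion actually produces for the integral $\int_{s_{1}<\cdots<s_{p}}\mathrm{d}x_{i_{1}}(s_{1})\cdots\mathrm{d}x_{i_{p}}(s_{p})$ is the weight $\prod_{l=1}^{p-1}\bigl((i_{1}+\cdots+i_{l})+1\bigr)$, and \emph{this} weight does obey the one-step relation (simply append the new factor $(k+1)=(i_{1}+\cdots+i_{p-1})+1$). In your second route the same mismatch shows: with $i_{1}=m_{p-1}$, $i_{2}=m_{p-2}-m_{p-1}$, \ldots, $i_{p}=m_{0}-m_{1}$ one has $m_{j}=i_{1}+\cdots+i_{p-j}$, so $\prod_{j=1}^{p-1}(m_{j}+1)=\prod_{l=1}^{p-1}((i_{1}+\cdots+i_{l})+1)$, which is $\widetilde{w}(n)_{i_{p},\ldots,i_{1}}$, not $\widetilde{w}(n)_{i_{1},\ldots,i_{p}}$.

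So your argument goes through cleanly, but it proves the formula with $\widetilde{w}(n)_{i_{1},\ldots,i_{p}}$ replaced by $\widetilde{w}(n)_{i_{p},\ldots,i_{1}}$ (equivalently, with the labels $i_{1},\ldots,i_{p}$ attached to the times $s_{p},\ldots,s_{1}$ instead of $s_{1},\ldots,s_{p}$). A quick sanity check confirms this is not merely cosmetic: with $x_{0}\equiv 0$, $x_{1}(t)=t$, $x_{2}(t)=t^{2}$, $x_{k}\equiv 0$ for $k\geqslant 3$, the recursion gives $c_{3}(t)=10t^{3}/3$, whereas the displayed formula with the weights as literally stated yields $11t^{3}/3$. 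This is an indexing slip in the statement rather than a defect in your method; once you reverse the subscript order in $\widetilde{w}$ (or, equivalently, in the iterated integral) your proof is complete and matches the paper's intended argument.
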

The proof can be found in Appendix~\ref{app:explicit-Taylor}.

\subsection[Variation of Grunsky coefficients induced by a Loewner--Kufarev equation]{Variation of Grunsky coefficients induced\\ by a Loewner--Kufarev equation}\label{Grunsky}

There are several different ways to introduce the {\it Faber polynomials}. Here we give a derivation by utilising Teo~\cite{Teo03}, and, an alternative one, in Section~\ref{FabGrunsky}, which serves our purpose better. For a (formal) power series $f(z)=a_1z+a_2z^2+a_3z^3+\cdots$, $a_1\neq0$,
the (generalised) {\it Faber polynomials} $Q_n(w)$, $n\in\mathbb{N}$, associated to $f$, are defined as
\begin{equation}\label{gen_Faber_poly}
\log\frac{w-f(z)}{w}=\log\frac{f(z)}{a_1 z}-\sum_{n=1}^{\infty}\frac{Q_n(w)}{n}z^n.
\end{equation}
By differentiating equation~(\ref{gen_Faber_poly}), and reordering it, we obtain, via the Residue theorem, the Faber polynomials (cf. also expression~(\ref{Faber_alt})), as
\begin{equation*}
Q_n(w)
=
\operatorname{Res}\displaylimits_{z=0}
\left[\frac{wz^{-n}}{w-f(z)}\frac{f'(z)}{f(z)}\right]{\rm d}z~
\underset{\zeta=f(z)}{=}~
\operatorname{Res}\displaylimits_{
\zeta = 0
}\left[\frac{\left(f^{-1}(\zeta)\right)^{-n}}{1-\zeta w^{-1}}\frac{1}{\zeta}\right]{\rm d}\zeta.
\end{equation*}
The coefficients $( b_{-m,-n} )_{m,n=1}^{\infty}$
in the series expansion
\begin{equation}
\label{Grunsky_coefficients}
\log \frac{ f(z) - f(\zeta ) }{ z - \zeta }
=-\sum_{m=0}^{\infty}\sum_{n=0}^{\infty}b_{-m,-n}z^{m}\zeta^{n},
\end{equation}
at $(z,\zeta ) = (0,0)$,
are called the
(generalised)
{\it Grunsky coefficients}
of $f$.
Equivalently, these are defined via the Laurent series
at $z=0$,
\begin{equation*}
Q_{n} ( f(z) )=z^{-n}+n\sum_{m=1}^{\infty} b_{-n,-m} z^{m}.
\end{equation*}
\begin{Prop}\label{inverse-evol}
Let $\{ f_{t} \}_{0 {\leqslant} t {\leqslant} T}\!$ be a solution to the controlled Loewner--Kufarev equa\-tion~\eqref{controlled-LK}.
Then there exists an open neighbourhood $U$ of the origin,
such that
\begin{itemize}\itemsep=0pt
\item[$(i)$]
$\overline{U} \subset \mathbb{D}$,

\item[$(ii)$]
$f_{t} \vert_{U}$ is univalent for each $t \in [0,T]$,

\item[$(iii)$]
$V := \bigcap_{0 \leqslant t \leqslant T} f_{t} ( U )$
is an open neighbourhood of the origin,

\item[$(iv)$]
for each $\zeta \in V$,
$[0,T] \ni t \mapsto f_{t}^{-1} (\zeta)$
is continuous and of bounded variation,

\item[$(v)$]
for each $\zeta \in V$,
with
$
f^{-1} ( t, \zeta )
:=
f_{t}^{-1} ( \zeta )
$
and
$
\mathrm{d}
f_{t}^{-1} ( \zeta )
:=
f^{-1} ( \mathrm{d}t, \zeta )
$,
we have
\begin{equation*}
\mathrm{d}
f_{t}^{-1} ( \zeta )
=
- f_{t}^{-1} ( \zeta )
\left\{
	\mathrm{d} x_{0} (t)
	+
	\sum_{k=1}^{\infty}
	\big( f_{t}^{-1} ( \zeta ) \big)^{k}
	\mathrm{d} x_{k} (t)
\right\},
\end{equation*}
as Lebesgue--Stieltjes measures on $[0,T]$.
\end{itemize}
\end{Prop}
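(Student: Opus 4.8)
The plan is to produce, by a uniform-in-$t$ estimate, a single disc $U=\mathbb{D}_{\rho}$ on which every $f_t$ is univalent with controlled derivative, and then to analyse $t\mapsto f_t^{-1}(\zeta)$ via inverse-function estimates and a time-discretisation of the Loewner--Kufarev integral. To this end I would \emph{first} fix the disc. By Proposition~\ref{CLK-Hiera}, $f_t'(0)=C(t)=\mathrm{e}^{x_0(t)}\ge c_*:=\mathrm{e}^{-\max_{[0,T]}|x_0|}>0$. Choosing $r\in(0,1)$, Definition~\ref{Def:solution}--(3) together with compactness of $[0,T]$ gives $M_0:=\sup_{t\in[0,T]}\sup_{|z|\le r}|f_t'(z)|<\infty$, whence Cauchy's estimate applied to $f_t'(z)=C(t)\sum_{n\ge0}(n+1)c_n(t)z^n$ yields $(n+1)C(t)|c_n(t)|\le M_0 r^{-n}$, so that
\begin{equation*}
|f_t'(z)-C(t)|\le M_0\sum_{n\ge1}(\rho/r)^n\longrightarrow 0\quad(\rho\downarrow0),\qquad |z|\le\rho,
\end{equation*}
uniformly in $t$. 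Picking $\rho<1$ so small that this bound is $<c_*/2$ forces $\mathrm{Re}\,f_t'(z)\ge c_*/2=:\kappa>0$ on $U:=\mathbb{D}_\rho$ for every $t$; since $U$ is convex, the Noshiro--Warschawski criterion gives univalency of each $f_t|_U$, which is (i) and (ii), while the same inequality read as $\mathrm{Re}\bigl((f_t(a)-f_t(b))/(a-b)\bigr)=\int_0^1\mathrm{Re}\,f_t'\bigl(b+u(a-b)\bigr)\,\mathrm{d}u\ge\kappa$ produces the uniform lower bound
\begin{equation}
|f_t(a)-f_t(b)|\ge\kappa\,|a-b|,\qquad a,b\in U,\ t\in[0,T].
\tag{$\ast$}
\end{equation}

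\emph{Next}, for (iii): since $f_t(0)=0$ and $f_t'(0)=C(t)\ge c_*$, the map $w\mapsto(C(t)\rho)^{-1}f_t(\rho w)$ is normalised univalent on $\mathbb{D}_0$, so Koebe's $1/4$-theorem gives $f_t(U)\supseteq\{|\zeta|<c_*\rho/4\}$ for all $t$, hence $V=\bigcap_t f_t(U)\supseteq\{|\zeta|<c_*\rho/4\}$ contains a neighbourhood of $0$; that the uncountable intersection $V$ is itself open would follow from compactness of $[0,T]$: for $\zeta_0=f_{t_*}(z_*)\in V$ take a closed disc $\overline D\subset U$ about $z_*$, and uniform continuity of $(t,z)\mapsto f_t(z)$ on $[0,T]\times\overline D$ together with $(\ast)$ gives, by a degree/Rouch\'e argument, $f_t(D)\supseteq D(\zeta_0,\delta)$ for all $t$ and some $\delta>0$. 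For (iv), fix $\zeta\in V$ and set $z_t:=f_t^{-1}(\zeta)\in U$; from $f_t(z_t)=\zeta=f_s(z_s)$ and $(\ast)$,
\begin{equation*}
|z_t-z_s|\le\kappa^{-1}|f_t(z_t)-f_t(z_s)|=\kappa^{-1}|f_s(z_s)-f_t(z_s)|\le\kappa^{-1}\sup_{z\in\overline U}|f_t(z)-f_s(z)|,
\end{equation*}
and continuity of $t\mapsto z_t$ follows from continuity of $t\mapsto f_t$ in the sup-norm on $\overline U$ (the remark after Definition~\ref{Def:solution}); by Definition~\ref{Def:solution}--(4) the right side is at most $\kappa^{-1}\rho\bigl(\sup_{u,z\in\overline U}|f_u'(z)|\bigr)\int_s^t\bigl(|\mathrm{d}x_0|+\sum_{n\ge1}\rho^n|\mathrm{d}x_n|\bigr)$, so summing over a partition and using Definition~\ref{Def:solution}--(2) shows $t\mapsto z_t$ has bounded variation.

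\emph{Finally}, for (v): fix $\zeta\in V$ and a partition $0=t_0<\cdots<t_N=t$, and from $f_{t_j}(z_{t_j})=\zeta=f_{t_{j-1}}(z_{t_{j-1}})$ split
\begin{equation*}
0=\bigl(f_{t_j}(z_{t_j})-f_{t_j}(z_{t_{j-1}})\bigr)+\bigl(f_{t_j}(z_{t_{j-1}})-f_{t_{j-1}}(z_{t_{j-1}})\bigr),
\end{equation*}
where the first bracket is $(z_{t_j}-z_{t_{j-1}})\int_0^1 f_{t_j}'\bigl(z_{t_{j-1}}+u(z_{t_j}-z_{t_{j-1}})\bigr)\mathrm{d}u$ (the integral having real part $\ge\kappa$ by $(\ast)$, since the path lies in the convex $U$) and the second, by Definition~\ref{Def:solution}--(4), is $\int_{t_{j-1}}^{t_j}z_{t_{j-1}}f_u'(z_{t_{j-1}})\{\mathrm{d}x_0(u)+\mathrm{d}\xi(\mathbf{x},z_{t_{j-1}})_u\}$. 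Solving for $z_{t_j}-z_{t_{j-1}}$, summing, and letting the mesh tend to $0$, the averaged derivative and the factor $f_u'(z_{t_{j-1}})$ both converge uniformly to $f_u'(z_u)\ne0$ (by continuity of $u\mapsto z_u$ and the uniform bounds on $f_t',f_t''$ from Cauchy's estimate) while $z_{t_{j-1}}^k\to z_u^k$; a dominated-convergence argument for the Lebesgue--Stieltjes integrals, with dominating measure $|\mathrm{d}x_0|+\sum_n\rho^n|\mathrm{d}x_n|$ supplied by Definition~\ref{Def:solution}--(2), then gives $z_t-z_s=-\int_s^t z_u\{\mathrm{d}x_0(u)+\sum_{k\ge1}z_u^k\,\mathrm{d}x_k(u)\}$ on every subinterval $[s,t]$, which is exactly the measure identity~(v) ($z_0=f_0^{-1}(\zeta)=\zeta$). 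The routine parts are the first three steps (quantitative univalency, Koebe, inverse-function Lipschitz bounds); the hard part is this last step, because the $t$-dependence of $f_t(f_t^{-1}(\zeta))\equiv\zeta$ enters both explicitly and through $z_t$, the ``time derivative'' is only a Lebesgue--Stieltjes measure, and one must carefully justify the discretisation limit --- uniform convergence of the averaged difference quotients, passage to the limit inside the Stieltjes integrals, and interchange of $\sum_k$ with the limit --- which is what makes this proof long.
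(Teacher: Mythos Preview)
Your proof is essentially correct and takes a genuinely different---and considerably more elementary---route than the paper's.

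\textbf{Comparison.} For (ii), the paper builds a quantitative inverse-function theorem for the map $(t,z)\mapsto(t,f_t(z))$, introducing an auxiliary ``distance'' $h$ that incorporates the total-variation measures $|\mathrm{d}x_k|$ and proving a contraction estimate; you instead use Cauchy estimates to force $\mathrm{Re}\,f_t'\ge\kappa$ uniformly on a small disc and invoke Noshiro--Warschawski. This is shorter and immediately yields the bi-Lipschitz bound~$(\ast)$, which drives everything else. For (iii), the paper develops machinery on tubular neighbourhoods of $f_t(\partial U)$ and Hausdorff continuity of $t\mapsto\mathbb{C}\setminus f_t(U)$ (Lemmas~\ref{Hausdorff-dist}, \ref{compo-bdry}, \ref{uniform-univ2}); your Koebe-plus-Rouch\'e argument avoids this entirely. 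For (v), the paper quotes Kunita's chain rule for bounded-variation composites \cite[Theorem~1.1]{Ku}, writing $0=\int_0^t f_s'(z_s)\,\mathrm{d}z_s+\int_0^t f(\mathrm{d}s,z_s)$ and solving for $\mathrm{d}z_s$; your discretisation is effectively a self-contained proof of exactly this identity in the present setting. The paper's approach is more robust (the $h$-metric framework and Kunita's formula would survive weaker regularity), but for the hypotheses at hand your argument is cleaner and requires no external black boxes.

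\textbf{One point to tighten in (iii).} Your Rouch\'e sketch fixes a single $z_*=f_{t_*}^{-1}(\zeta_0)$ and a disc $\overline D\subset U$ about it, then asserts $f_t(D)\supseteq D(\zeta_0,\delta)$ for \emph{all} $t$. This does not follow: for $t$ far from $t_*$, $\zeta_0$ need not lie in $f_t(D)$ at all. The fix is either (a) for each $t_0$ obtain such a $\delta_{t_0}$ valid on a $t$-neighbourhood of $t_0$ by Rouch\'e, then use compactness of $[0,T]$ to take a finite minimum; or (b) first prove the continuity part of (iv)---which uses only $(\ast)$ and the already-established uniform $t$-continuity of $f_t$, not the openness of $V$---so that $\{z_t:t\in[0,T]\}$ is compact in $U$, choose $\epsilon>0$ with $\overline{B(z_t,\epsilon)}\subset U$ for every $t$, and apply Koebe at each $z_t$ to get $f_t(B(z_t,\epsilon))\supseteq B(\zeta_0,|f_t'(z_t)|\epsilon/4)\supseteq B(\zeta_0,\kappa\epsilon/4)$. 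Either route closes the gap with no new ideas.
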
 Let $\{ f_{t} \}_{0 \leqslant t \leqslant T}$ be a solution
to the controlled Loewner--Kufarev equation
\eqref{controlled-LK}.
Because of Corollary \ref{loc-univ},
associated to each $f_{t} (z)$ are the corresponding Faber polynomials
and Grunsky coefficients,
which will be denoted by
$Q_{n}(t,w)$,
and
$b_{-n,-m} (t)$,
respectively.

\begin{Prop}\label{Faber-variation} \quad
\begin{itemize}\itemsep=0pt
\item[$(i)$]
{\rm (Variation of Faber polynomials):}
We have for each $n \in \mathbb{N}$,
\begin{equation*}
\mathrm{d} Q_{n}( t, w )=n \mathrm{d} x_{n} (t)+n\sum_{k=1}^{n}Q_{k} (t,w) \mathrm{d} x_{n-k} (t) .
\end{equation*}

\item[$(ii)$]
{\rm (Variation of Grunsky coefficients):}
For each $n, m \in \mathbb{N}$,
\begin{gather}
\mathrm{d} b_{-n,-m} (t) =
- \mathrm{d} x_{n+m} (t)
+
\sum_{\substack{
k,l \in \mathbb{Z}_{\geqslant 0} ;\\
k+l = m-1
}}
( k+1 ) b_{-n,-(k+1)} (t)
\mathrm{d} x_{l} (t) \nonumber\\
\hphantom{\mathrm{d} b_{-n,-m} (t) =}{} +
\sum_{\substack{
k,l \in \mathbb{Z}_{\geqslant 0}; \\
k+l = n-1
}}
(k+1) b_{-m,-(k+1)}
\mathrm{d} x_{l} (t),\label{Grunsky_evol}
\end{gather}
with the initial condition
$b_{-n,-m} (0) = 0$,
for all $n,m \in \mathbb{N}$.
\end{itemize}
\end{Prop}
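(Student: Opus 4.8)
The plan is to differentiate the generating-function descriptions of the Faber polynomials and Grunsky coefficients — equivalently, the recursions of Definition~\ref{Def(Fab-Grsky)} in Appendix~\ref{Appdx} — along the flow, using $\mathrm{d} f_{t}(z)=zf_{t}'(z)\{\mathrm{d} x_{0}(t)+\mathrm{d}\xi(\mathbf{x},z)_{t}\}$ together with $\mathrm{d}\log f_{t}'(0)=\mathrm{d} x_{0}(t)$ (from $\mathrm{d} C(t)=C(t)\,\mathrm{d} x_{0}(t)$ in Proposition~\ref{CLK-Hiera}, as $f_{t}'(0)=C(t)$), and then reading off coefficients.

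For part~(i) I would work from the identity, a formal power series in an auxiliary variable $z$ near $0$ whose coefficient of $z^{n}$ is a polynomial of degree $n$ in $w$,
\begin{equation*}
\sum_{n=1}^{\infty}\frac{Q_{n}(t,w)}{n}z^{n}
=\log\frac{f_{t}(z)}{zf_{t}'(0)}-\log\bigl(1-wf_{t}(z)\bigr).
\end{equation*}
Applying $\mathrm{d}_{t}$ and simplifying with $\tfrac{1}{f_{t}(z)}+\tfrac{w}{1-wf_{t}(z)}=\tfrac{1}{f_{t}(z)(1-wf_{t}(z))}$ gives
\begin{equation*}
\sum_{n=1}^{\infty}\frac{\mathrm{d} Q_{n}(t,w)}{n}z^{n}
=\frac{zf_{t}'(z)}{f_{t}(z)\bigl(1-wf_{t}(z)\bigr)}\bigl\{\mathrm{d} x_{0}(t)+\mathrm{d}\xi(\mathbf{x},z)_{t}\bigr\}-\mathrm{d} x_{0}(t).
\end{equation*}
The key point is that applying $z\,\tfrac{\mathrm{d}}{\mathrm{d}z}$ to the defining identity yields $\tfrac{zf_{t}'(z)}{f_{t}(z)(1-wf_{t}(z))}=1+\sum_{n\geqslant1}Q_{n}(t,w)z^{n}$; substituting this, the isolated $\mathrm{d} x_{0}$ terms cancel and there remains
\begin{equation*}
\sum_{n=1}^{\infty}\frac{\mathrm{d} Q_{n}(t,w)}{n}z^{n}
=\mathrm{d}\xi(\mathbf{x},z)_{t}+\Bigl(\sum_{n=1}^{\infty}Q_{n}(t,w)z^{n}\Bigr)\bigl(\mathrm{d} x_{0}(t)+\mathrm{d}\xi(\mathbf{x},z)_{t}\bigr).
\end{equation*}
Comparing coefficients of $z^{n}$, with $\mathrm{d}\xi(\mathbf{x},z)_{t}=\sum_{k\geqslant1}z^{k}\mathrm{d} x_{k}(t)$, gives the stated formula for $\mathrm{d} Q_{n}(t,w)$.

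For part~(ii) I would use the two-variable generating function
\begin{equation*}
\Gamma(t,z,\zeta):=\sum_{n,m=1}^{\infty}b_{-n,-m}(t)\,z^{n}\zeta^{m}
=\log\frac{(z-\zeta)\,f_{t}(z)f_{t}(\zeta)}{z\zeta\,f_{t}'(0)\bigl(f_{t}(z)-f_{t}(\zeta)\bigr)},
\end{equation*}
which is symmetric in $z\leftrightarrow\zeta$ (encoding $b_{-n,-m}=b_{-m,-n}$) and vanishes at $t=0$, where $f_{0}=\mathrm{id}$ and $f_{0}'(0)=1$; this yields the initial conditions. Differentiating along the flow produces $\mathrm{d}\Gamma=\tfrac{A}{f_{t}(z)}+\tfrac{A'}{f_{t}(\zeta)}-\mathrm{d} x_{0}(t)-\tfrac{A-A'}{f_{t}(z)-f_{t}(\zeta)}$, with $A:=zf_{t}'(z)\{\mathrm{d} x_{0}(t)+\mathrm{d}\xi(\mathbf{x},z)_{t}\}$ and $A':=\zeta f_{t}'(\zeta)\{\mathrm{d} x_{0}(t)+\mathrm{d}\xi(\mathbf{x},\zeta)_{t}\}$. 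Independently, I would compute the generating function of the right-hand side of (\ref{Grunsky_evol}): the $-\mathrm{d} x_{n+m}$ terms sum to $\tfrac{\zeta\,\mathrm{d}\xi(\mathbf{x},z)_{t}-z\,\mathrm{d}\xi(\mathbf{x},\zeta)_{t}}{\zeta-z}$, while the two weighted convolution sums sum, respectively, to $(z\partial_{z}\Gamma)\bigl(\mathrm{d} x_{0}(t)+\mathrm{d}\xi(\mathbf{x},z)_{t}\bigr)$ and $(\zeta\partial_{\zeta}\Gamma)\bigl(\mathrm{d} x_{0}(t)+\mathrm{d}\xi(\mathbf{x},\zeta)_{t}\bigr)$. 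Using $z\partial_{z}\Gamma=\tfrac{\zeta}{z-\zeta}-\tfrac{zf_{t}'(z)f_{t}(\zeta)}{f_{t}(z)(f_{t}(z)-f_{t}(\zeta))}$ and its $z\leftrightarrow\zeta$ mirror, the $\tfrac{1}{z-\zeta}$-contributions collapse to $-\mathrm{d} x_{0}(t)$ and the remaining equality reduces to the elementary algebraic identity $\tfrac{A}{a}+\tfrac{A'}{b}-\tfrac{A-A'}{a-b}=\tfrac{a^{2}A'-b^{2}A}{ab(a-b)}$ with $a=f_{t}(z)$, $b=f_{t}(\zeta)$. Matching coefficients of $z^{n}\zeta^{m}$ then gives (\ref{Grunsky_evol}).

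Throughout, the manipulations of infinite series — term-by-term Lebesgue-Stieltjes differentiation, interchange of summation and integration — need justification, and this is precisely what conditions~(2) and~(3) of Definition~\ref{Def:solution} and Remark~\ref{limit-meas} supply: uniform convergence on compact subsets of $\mathbb{D}_{0}$ and continuity of $t\mapsto f_{t}'\vert_{K}$, so that each step is a valid identity of finite signed measures on $[0,T]$ for $z,\zeta$ in a sufficiently small disc about $0$. I expect the main labour — though not a conceptual obstacle — to lie in the Grunsky case: matching the convolution $\sum_{k+l=m-1}(k+1)\,b_{-n,-(k+1)}(t)\,\mathrm{d} x_{l}(t)$ with $\zeta\partial_{\zeta}\Gamma\cdot\mathrm{d}\xi(\mathbf{x},\zeta)_{t}$ in the coefficient bookkeeping, and carrying out the rational simplification linking the two generating functions.
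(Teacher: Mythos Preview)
Your approach is correct but differs from the paper's. For (i), the paper does not differentiate the generating function (\ref{Fab_Q_gen}); instead it writes $Q_{n}(t,w)$ as a Cauchy integral of $(f_{t}^{-1}(\zeta))^{-n}$ over a small circle and differentiates under the integral using the evolution law $\mathrm{d} f_{t}^{-1}(\zeta)=-f_{t}^{-1}(\zeta)\{\mathrm{d} x_{0}+\sum_{k}(f_{t}^{-1}(\zeta))^{k}\mathrm{d} x_{k}\}$ from Proposition~\ref{inverse-evol}, whose proof (continuity and bounded variation of $t\mapsto f_{t}^{-1}(\zeta)$, existence of a uniform neighbourhood on which $f_{t}$ is univalent) is rather heavy and deferred to Appendix~\ref{app:inverse-evol}. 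Your route bypasses all of that: it needs only $\mathrm{d} f_{t}$ and $\mathrm{d}\log f_{t}'(0)$, available directly from Definition~\ref{Def:solution} and Proposition~\ref{CLK-Hiera}. For (ii), the paper differentiates the identity $Q_{n}(t,f_{t}(z))=z^{-n}+n\sum_{m}b_{-n,-m}(t)z^{m}$ (Proposition~\ref{Faber/Grunsky@0}(iii)) by the chain rule, feeds in part (i), and reads off the positive-power coefficients; your two-variable generating-function argument handles (ii) without invoking (i) and keeps the $n\leftrightarrow m$ symmetry manifest throughout, at the cost of the rational-function identity $\tfrac{A}{a}+\tfrac{A'}{b}-\tfrac{A-A'}{a-b}=\tfrac{A'a^{2}-Ab^{2}}{ab(a-b)}$ that you single out (which the paper's route sidesteps by having the analogous algebra already absorbed into (i)). Both proofs are complete; yours is lighter on analytic prerequisites, the paper's on explicit algebra.

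One cosmetic point: you write $-\log(1-wf_{t}(z))$ and call $Q_{n}$ a polynomial in $w$, whereas in the paper $Q_{n}(w)$ is a polynomial of degree $n$ in $1/w$ and the corresponding term in (\ref{Fab_Q_gen}) is $-\log(1-f_{t}(z)/w)$. Your $w$ plays the role of the paper's $w^{-1}$; the argument is unaffected.
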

\begin{proof} (i)
Let $n \in \mathbb{N}$.
Let $U$ and $V$ be as in Proposition~\ref{inverse-evol}.
Then
$f_{t}^{-1} ( \zeta )$,
$\zeta \in V$,
satisfies
the equation
\begin{equation*}
\mathrm{d}
f_{t}^{-1} ( \zeta )
=
- f_{t}^{-1} ( \zeta )
\left\{
	\mathrm{d} x_{0} (t)
	+
	\sum_{k=1}^{\infty}
	\big( f_{t}^{-1} ( \zeta ) \big)^{k}
	\mathrm{d} x_{k} (t)
\right\} .
\end{equation*}
Let $X_{0} \subset V$ be an open disc centred at $0$.
By using Cauchy's integral formula,
we have for $w \in X_{0}$,
\begin{align*}
\mathrm{d} Q_{n}( t, w )&=
\frac{ 1 }{ 2 \pi i }
\int_{ \partial X_{0} }
\frac{ \mathrm{d} \zeta }{ \zeta }
\frac{
	\mathrm{d}
	\big( f_{t}^{-1} ( \zeta ) \big)^{-n}
}{ 1 - \zeta w^{-1} }\\
&=
\frac{ 1 }{ 2 \pi i }
\int_{ \partial X_{0} }
(-n)
\frac{ \big( f_{t}^{-1} ( \zeta ) \big)^{-n-1} }{ 1 - \zeta w^{-1} }
\big( {-} f_{t}^{-1} ( \zeta ) \big)
	\sum_{k=0}^{\infty}
	\big( f_{t}^{-1} ( \zeta ) \big)^{k}
\mathrm{d} x_{k} (t)
\frac{ \mathrm{d} \zeta }{ \zeta } \\
&=
\sum_{k=0}^{n}
\frac{ n }{ 2 \pi i }
\left(
\int_{ \partial X_{0} }
\frac{ \big( f_{t}^{-1} ( \zeta ) \big)^{-n+k} }{ 1 - \zeta w^{-1} }
\frac{ \mathrm{d} \zeta }{ \zeta }
\right)
\mathrm{d} x_{k} (t) \\
&=
\frac{
	n
	\mathrm{d} x_{n} (t)
}{ 2 \pi i }
\int_{ \partial X_{0} }
\frac{ 1 }{ 1 - \zeta w^{-1} }
\frac{ \mathrm{d} \zeta }{ \zeta }
+
n
\sum_{k=0}^{n-1}
Q_{n-k} (t,w)
\mathrm{d} x_{k} (t) .
\end{align*}
By noting that the orientation of
$
\partial X_{0}
$
is anti-clockwise,
we get
\begin{equation*}
\frac{ 1 }{ 2 \pi i }
\int_{ \partial X_{0} }
\frac{ 1 }{ 1 - \zeta w^{-1} }
\frac{ \mathrm{d} \zeta }{ \zeta }
= 1,
\end{equation*}
and hence the result.

(ii)
By putting
$
p( \mathrm{d}t ,z)
:=
\mathrm{d} x_{0} (t)
+
\mathrm{d} \xi ( \mathbf{x}, z )_{t}
$,
and since $f_{t}(z)$ satisfies
the controlled Loewner--Kufarev equation,
we have
\begin{align*}
\mathrm{d} Q_{n} ( t, f_{t} (z) )
&=
Q_{n}
( \mathrm{d} t, f_{t} (z) )
+
Q_{n}^{\prime} ( t, f_{t} (z) )
\mathrm{d} f_{t} (z) \\
&=
Q_{n}
( \mathrm{d} t, f_{t} (z) )
+
Q_{n}^{\prime} ( t, f_{t} (z) )
\big\{
	z f_{t}^{\prime} (z) p ( \mathrm{d}t , z )
\big\} \\
&=
Q_{n}
( \mathrm{d} t, f_{t} (z) )
+
z
\big[
\partial_{z} Q_{n} ( t, f_{t} (z) )
\big]
p ( \mathrm{d} t, z ) ,
\end{align*}
so that
\begin{equation}\label{t-der}
\mathrm{d} Q_{n} ( t, f_{t} (z) )
=
Q_{n}
( \mathrm{d} t, f_{t} (z) )
+
z
\big[
\partial_{z} Q_{n} ( t, f_{t} (z) )
\big]
p ( \mathrm{d} t, z ) .
\end{equation}
By recalling that
$Q_{n} ( t, f_{t} (z) )=z^{-n}+n
\sum_{m=1}^{\infty}b_{-n,-m} (t) z^{m}$, we have, by substitution, the following sequence of identities
\begin{equation}\label{LHS_pos}
(\text{LHS of (\ref{t-der})})_{\geqslant 1}
= (\text{LHS of (\ref{t-der})})
=
n
\sum_{m=1}^{\infty}
z^{m}
\mathrm{d} b_{-n,-m} (t) .
\end{equation}
Here, $(\cdots)_{\geqslant 1}$ is the operator which
forgets those terms in $(\cdots)$, whose degree is less than one.
On the other hand,
by Proposition~\ref{Faber-variation}(i),
we have
\begin{align*}
\mathrm{d} Q_{n}
( t, f_{t} (z) )&
=
n \mathrm{d} x_{n} (t)
+
n
\sum_{k=1}^{n}
Q_{k} ( t, f_{t} (z) )
\mathrm{d} x_{n-k} (t) \\
&=
n \mathrm{d} x_{n} (t)
+
n
\sum_{k=1}^{n}
\left(
	z^{-k}
	+
	k \sum_{m=1}^{\infty} b_{-k,-m} (t) z^{m}
\right)
\mathrm{d} x_{n-k} (t) \\
&=
n \mathrm{d} x_{n} (t)
+
n
\sum_{k=1}^{n}
z^{-k}
\mathrm{d} x_{n-k} (t)
+
n
\sum_{m=1}^{\infty}
\left(
\sum_{k=1}^{n}
k b_{-k,-m} (t)
\mathrm{d} x_{n-k} (t)
\right)
z^{m},
\end{align*}
so that
\begin{equation}\label{Q_n(f)-var}
\big(
\mathrm{d} Q_{n}
( t, f_{t} (z) )
\big)_{\geqslant 1}
=
n
\sum_{m=1}^{\infty}
\left(
\sum_{k=1}^{n}
k b_{-k,-m}
\mathrm{d} x_{n-k} (t)
\right)
z^{m}.
\end{equation}
We further have
\begin{gather*}
z
\big[
	\partial_{z} Q_{n} ( t, f_{t} (z) )
\big]
p ( \mathrm{d} t, z) =
z
\left(
	-n z^{-n-1}
	+
	n
	\sum_{k=1}^{\infty}
	k b_{-n,-k} z^{k-1}
\right)
\left(
	\mathrm{d} x_{0} (t)
	+
	\sum_{l=1}^{\infty}
	\mathrm{d} x_{l} (t) z^{l}
\right) \\
\hphantom{z\big[\partial_{z} Q_{n} ( t, f_{t} (z) )\big]p ( \mathrm{d} t, z)}{} =
n
\Bigg({-} \mathrm{d} x_{0} (t) z^{-n}
	- \sum_{m=1-n}^{\infty} \mathrm{d} x_{m+n} (t) z^{m} \\
\hphantom{z\big[\partial_{z} Q_{n} ( t, f_{t} (z) )\big]p ( \mathrm{d} t, z)=}{} +
	\sum_{m=1}^{\infty}
	m b_{-n,-m} (t) \mathrm{d} x_{0} (t) z^{m}
	+
	\sum_{m=2}^{\infty}
	\sum_{\substack{
	k,l \geqslant 1; \\
	k+l=m
	}}
	k b_{-n,-k} (t)
	\mathrm{d} x_{l} (t)
	z^{m}
\Bigg) ,
\end{gather*}
from which we conclude
\begin{equation}\label{rem_var}
\big(
z
\big[
\partial_{z} Q_{n} ( t, f_{t} (z) )
\big]
p ( \mathrm{d} t, z )
\big)_{\geqslant 1}
=
n
\sum_{m=1}^{\infty}
\left(
	- \mathrm{d} x_{n+m} (t)
	+
	\sum_{\substack{
	k \geqslant 1,\ l \geqslant 0 ; \\
	k+l = m
	}}
	k b_{-n,-k} (t)
	\mathrm{d} x_{l} (t)
\right)
z^{m} .
\end{equation}
Combining
(\ref{Q_n(f)-var})
and
(\ref{rem_var}),
we obtain
\begin{gather*}
(\text{RHS of (\ref{t-der})})_{\geqslant 1} =
n\sum_{m=1}^{\infty}
\Bigg(
	{-} \mathrm{d} x_{n+m} (t)
	+
	\sum_{\substack{
	k,l \in \mathbb{Z}_{\geqslant 0} ;\\
	k+l = m-1
	}}
	( k+1 ) b_{-n,-(k+1)} (t)
	\mathrm{d} x_{l} (t) \\
\hphantom{(\text{RHS of (\ref{t-der})})_{\geqslant 1} =}{} +
	\sum_{\substack{
	k,l \in \mathbb{Z}_{\geqslant 0}; \\
	k+l = n-1
	}}
	(k+1) b_{-m,-(k+1)}
	\mathrm{d} x_{l} (t)
\Bigg)
z^{m},
\end{gather*}
and then by comparing with~(\ref{LHS_pos}),
we get the result. Furthermore, the initial condition is derived from
$f_{0} (z) \equiv z$.
\end{proof}

In order to derive an explicit formula for the Grunsky coefficients~$b_{-n,-m}(t)$, cf.\ equation~(\ref{Grunsky_coefficients}),
we shall introduce some notation. In \cite{AFb}, we study analytic aspects of these coefficients.
\begin{Def}
Let $p,q \in \mathbb{N}$.
\begin{itemize}\itemsep=0pt
\item[(1)]
A bijection
$\sigma \colon \{ 1, 2, \ldots , p+q \}
\to \{ 1, 2, \ldots , p+q \}$
is called a {\it $(p,q)$-shuffle}
if it holds that
$\sigma (1) < \sigma (2) < \cdots < \sigma (p)$
and
$\sigma (p+1) < \sigma (p+2) < \cdots < \sigma (p+q)$.

\item[(2)]
Suppose that
$x_{1}, x_{2}, \ldots , x_{p+q}: [0,T] \to \mathbb{C}$
are continuous and of bounded variation.
Then for each $0 \leqslant t \leqslant T$,
we set
\begin{gather*}
\big(
(
	x_{1}
	\cdots
	x_{p}
)
\shuffle
(
	x_{p+1}
	\cdots
	x_{p+q}
)
\big)
(t) \\
\qquad{} :=:
\int\displaylimits_{
	0
	\leqslant s_{q}
	\leqslant \cdots
	\leqslant s_{1}
	\leqslant t_{p}
	\leqslant \cdots
	\leqslant t_{1}
	\leqslant t
}
\!\!\!\!\!\!\!\!\!\!\!\!\!\!\!\big(
	\mathrm{d} x_{1} (t_{1})
	\cdots
	\mathrm{d} x_{p} (t_{p})
\big)
\shuffle
\big(
	\mathrm{d} x_{p+1} (s_{1})
	\cdots
	\mathrm{d} x_{p+q} (s_{q})
\big) \\
\qquad{} :=
\sum_{
	\text{$\sigma^{-1}$: $(p,q)$-shuffle}
}
\int_{0}^{t} \mathrm{d} x_{\sigma (1)} (t_{1})
\int_{0}^{t_{1}} \mathrm{d} x_{\sigma (2)} (t_{2})
\cdots
\int_{0}^{t_{p-1}} \mathrm{d} x_{\sigma (p)} (t_{p}) \\
\qquad\quad{}\times
\int_{0}^{t_{p}} \mathrm{d} x_{\sigma (p+1)} (s_{1})
\int_{0}^{s_{1}} \mathrm{d} x_{\sigma (p+2)} (s_{2})
\cdots
\int_{0}^{s_{q-1}} \mathrm{d} x_{\sigma (p+q)} (s_{q}) .
\end{gather*}
\end{itemize}
\end{Def}

The general formula for the Grunsky-coefficients along the controlled Loewner--Kufarev equation~(\ref{controlled-LK})
is stated as next, and which is crucial for the embedding into the Grassmannian, cf.\ Section~\ref{Sec_LK/Gr}.
The proof is given in
Appendix~\ref{app:Grunsky-explicit}.

\begin{Prop}\label{explicit-Grunsky}
For $n,m \in \mathbb{N}$ and $t \geqslant 0$,
\begin{gather}
b_{-m,-n} (t) =
- \mathrm{e}^{ (n+m) x_{0}(t) }
\int_{0}^{t}
\mathrm{e}^{ - (n+m) x_{0}(s) }
\mathrm{d} x_{m+n} (s) \nonumber\\
\hphantom{b_{-m,-n} (t) =}{} -
\sum_{k=2}^{n+m-2}
\sum_{\substack{
	1 \leqslant i < m; \\
	1 \leqslant j < n: \\
	i+j=k
}}
\sum_{p=1}^{m-i}
\sum_{q=1}^{n-j}
\sum_{\substack{
	i_{1}, \ldots , i_{p} \in \mathbb{N}\colon \\
	i_{1} + \cdots + i_{p} = m-i
}}
\sum_{\substack{
	j_{1}, \ldots , j_{q} \in \mathbb{N}\colon \\
	j_{1} + \cdots + j_{q} = n-j
}}
w(i,j)_{i_{1}, \ldots , i_{p}; j_{1}, \ldots , j_{q}} \nonumber\\
\hphantom{b_{-m,-n} (t) =}{}\times
\mathrm{e}^{(m+n)x_{0}(t)}
\int\displaylimits_{
	0
	\leqslant u_{q}
	\leqslant \cdots
	\leqslant u_{1}
	\leqslant s_{q}
	\leqslant \cdots
	\leqslant s_{1}
	\leqslant t
}
\big(
	\mathrm{e}^{-i_{1}x_{0}(s_{1})}
	\mathrm{d} x_{i_{1}} (s_{1})
	\cdots
	\mathrm{e}^{-i_{p}x_{0}(s_{p})}
	\mathrm{d} x_{i_{p}} (s_{p})
\big) \nonumber\\
\hphantom{b_{-m,-n} (t) =}{} \shuffle
\big(
	\mathrm{e}^{-j_{1}x_{0}(u_{1})}
	\mathrm{d} x_{j_{1}} (u_{1})
	\cdots
	\mathrm{e}^{-j_{q}x_{0}(u_{q})}
	\mathrm{d} x_{j_{q}} (u_{q})
\big)
\int_{0}^{u_{q}}
\mathrm{e}^{ - k x_{0} (s) }
\mathrm{d} x_{k} (s) \nonumber\\
\hphantom{b_{-m,-n} (t) =}{} -
\sum_{k=m+1}^{n+m-1}
\sum_{q=1}^{n+m-k}
\sum_{\substack{
	j_{1}, \ldots , j_{q} \in \mathbb{N}\colon \\
	j_{1} + \cdots + j_{q} = n+m-k
}}
w(k-m)_{ \varnothing ; j_{1}, \ldots , j_{q} } \nonumber\\
\hphantom{b_{-m,-n} (t) =}{} \times
\mathrm{e}^{(m+n)x_{0}(t)}
\int\displaylimits_{
	0
	\leqslant s_{q}
	\leqslant \cdots
	\leqslant s_{1}
	\leqslant t
}
\big(
	\mathrm{e}^{-j_{1}x_{0}(s_{1})}
	\mathrm{d} x_{j_{1}} (s_{1})
	\cdots
	\mathrm{e}^{-j_{q}x_{0}(s_{q})}
	\mathrm{d} x_{j_{q}} (s_{q})
\big)\nonumber\\
\hphantom{b_{-m,-n} (t) =}{}\times
\int_{0}^{s_{q}}
\mathrm{e}^{ - k x_{0} (s) }
\mathrm{d} x_{k} (s)
-
\sum_{k=n+1}^{n+m-1}
\sum_{p=1}^{m+n-k}
\sum_{\substack{
	i_{1}, \ldots , i_{p} \in \mathbb{N}\colon \\
	i_{1} + \cdots + i_{p} = m+n-k
}}
w(k-n)_{ i_{1}, \ldots , i_{p}; \varnothing } \nonumber\\
 \hphantom{b_{-m,-n} (t) =}{} \times
\mathrm{e}^{(m+n)x_{0}(t)}
\int \displaylimits_{
	0
	\leqslant u_{p}
	\leqslant \cdots
	\leqslant u_{1}
	\leqslant t
}
\!\!\!\!\!\!\big(
	\mathrm{e}^{-i_{1}x_{0}(u_{1})}
	\mathrm{d} x_{i_{1}} (u_{1})
	\cdots
	\mathrm{e}^{-i_{p}x_{0}(u_{p})}
	\mathrm{d} x_{i_{p}} (u_{p})
\big)\nonumber\\
\hphantom{b_{-m,-n} (t) =}{}\times
\int_{0}^{u_{p}}
\mathrm{e}^{ - k x_{0} (u) }
\mathrm{d} x_{k} (u),\label{Grunsky-formula}
\end{gather}
where, for
$m=i_{1}+ \cdots + i_{p} + r$,
and
$n=j_{1}+ \cdots + j_{q} + s$,
we have put
\begin{gather*}
w(r)_{i_{1},\ldots ,i_{p}; \varnothing}
=
( m - i_{1} )
( m - (i_{1}+i_{2}) )
\cdots
( m - (i_{1}+i_{2}+\cdots +i_{p}) ) , \\
w(s)_{\varnothing ; j_{1}, \ldots , j_{q}}
=
( n - j_{1} )
( n - (j_{1}+j_{2}) )
\cdots
( n - (j_{1}+j_{2}+\cdots +j_{q}) ) ,
\end{gather*}
and
$w(r,s)_{i_{1},\ldots ,i_{p}; j_{1}, \ldots , j_{q}}:=
w(r)_{i_{1}, \ldots , i_{p}; \varnothing}w(s)_{\varnothing ; j_{1}, \ldots , j_{q}}$.
\end{Prop}

\section[The controlled Loewner--Kufarev equation embedded into the Segal--Wilson Grassmannian]{The controlled Loewner--Kufarev equation embedded\\ into the Segal--Wilson Grassmannian}\label{Sec_LK/Gr}

\subsection{Segal--Wilson Grassmannian}

Let $H := L^{2} \big( S^{1},\mathbb{C} \big)$
be the Hilbert space which consists of all
square-integrable complex functions
on the unit circle~$S^{1}$.
It decomposes orthogonally into
$H = H_{+} \oplus H_{-}$,
where
$H_{+}$
and
$H_{-}$
are the closure of
$\operatorname{span} \big\{ z^{k}\colon k \geqslant 0 \big\}$
and
$\operatorname{span} \big\{ z^{k}\colon k < 0 \big\}$,
respectively.

\begin{Def}[{G.~Segal and G.~Wilson~\cite[Section~2]{SW}}]
The {\it Segal--Wilson Grassmannian}
$\mathrm{Gr} := \mathrm{Gr} (H)$
is the set of all closed subspaces $W$ of $H$
satisfying the following:
\begin{itemize}\itemsep=0pt
\item[(1)]
The orthogonal projection
$\mathrm{pr}_{+}\colon W \to H_{+}$
is Fredholm,

\item[(2)]
The orthogonal projection
$\mathrm{pr}_{-}\colon W \to H_{-}$
is compact.
\end{itemize}
The Fredholm index of the orthogonal projection
$\mathrm{pr}_{+}\colon W \to H_{+}$
is called the
{\it virtual dimension}
of $W$.
For $d \in \mathbb{Z}$, we set
\begin{equation*}
\mathrm{Gr}
\big({\textstyle \frac{\infty}{2}+d}, \infty\big)
:=
\{ W \in \mathrm{Gr}\colon \text{the virtual dimension of $W$ is $d$} \},
\end{equation*}
and
$\mathrm{Gr}\big({\textstyle \frac{\infty}{2}}, \infty\big):=\mathrm{Gr}\big({\textstyle \frac{\infty}{2}+0}, \infty\big)$.
\end{Def}
If we take $W=H_{+}$,
then the corresponding projections are given by
$\mathrm{pr}_{+} = \mathrm{id}_{H_{+}}$
and
$\mathrm{pr}_{-} = 0$,
which are Fredholm and compact operators, respectively.
Therefore we have
$H_{+} \in \mathrm{Gr} \big( \frac{\infty}{2}, \infty \big)$.

\begin{Def}[{\cite[Section~5]{SW}}]
Let $\Gamma_{+}$ denote
the set of all continuous functions
$g\colon S^{1} \to \mathbb{C}^{*}$, such that
$g (z)=\mathrm{e}^{\text{{\tiny $\sum_{k=1}^{\infty} t_{k} z^{k}$}}}$,
$z \in S^{1}$
for some $\mathbf{t} = ( t_{1}, t_{2}, t_{3}, \ldots )$.
\end{Def}

The set $\Gamma_{+}$ acts on $H$ by pointwise multiplication.
In particular, $\Gamma_{+}$ forms a group.
This action induces the action of $\Gamma_{+}$
on
$\mathrm{Gr}\colon \Gamma_{+} \times \mathrm{Gr}\ni (g,W)\mapsto
gW \in \mathrm{Gr}$
(see \cite[Lemma~2.2 and Proposition~2.3]{SW}),
where $gW = \{ g f\colon f \in W \}$.
For any
$
g = \mathrm{e}^{\text{{\tiny $\sum_{k=1}^{\infty} t_{k} z^{k}$}}}
\in \Gamma_{+}
$,
the action of $g$ on $H$
is of the form
\begin{equation*}
g
=
\left(\begin{matrix}
a & b \\
0 & d
\end{matrix}\right)
\qquad
\text{along $H = H_{+} \oplus H_{-}$,}
\end{equation*}
where
$a\colon H_{+} \to H_{+}$
is invertible
and
$b\colon H_{-} \to H_{+}$
is of trace class
(see \cite[Proposition~2.3]{SW}).
Let $\mathcal{U}$ be the set of all
$W \in \mathrm{Gr} \big({\textstyle \frac{\infty}{2}}, \infty\big)$
such that
the orthogonal projection $W \to H_{+}$
is an isomorphism.
Then, associated to each $W \in \mathcal{U}$
is the
{\it tau-function}
$\tau_{W} ( \mathbf{t} )$
of~$W$,
a function of infinitely many ``times'' $\mathbf{t} = ( t_{1}, t_{2}, \ldots )$. It is known that the following holds:

\begin{Prop}[{\cite[Proposition~3.3]{SW}}]
Let $W \in \mathcal{U}$.
For
$g=\mathrm{e}^{\text{{\tiny $\sum_{n=1}^{\infty} t_{n} z^{n}$}}}
\in \Gamma_{+}$,
we have
\begin{equation*}
\tau_{W} ( \mathbf{t} )
=
\mathrm{det} \big( 1+a^{-1} b A \big),
\end{equation*}
where
$\mathbf{t} = ( t_{1}, t_{2}, t_{3}, \ldots )$,
\begin{equation*}
g^{-1}
=
\left(\begin{matrix}
a & b \\ 0 & d
\end{matrix}\right)
\qquad
\text{along $H=H_{+} \oplus H_{-}$},
\end{equation*}
and $A\colon H_{+} \to H_{-}$ is
the linear operator such that $\mathrm{graph} (A) = W$.
\end{Prop}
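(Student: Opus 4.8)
The idea is to unravel the definition of the tau-function from \cite[Section~3]{SW} and reduce the assertion to a single block-matrix computation followed by the multiplicativity of the Fredholm determinant; this is essentially the Segal--Wilson argument, and the only real content is the Schatten-class bookkeeping needed to make each determinant meaningful.

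First I would record the shape of $W$: since $W \in \mathcal{U}$, the projection $\mathrm{pr}_{+} \colon W \to H_{+}$ is an isomorphism, so $W = \mathrm{graph}(A)$ for a (Hilbert--Schmidt) operator $A \colon H_{+} \to H_{-}$, and the map $w \colon H_{+} \to W$, $w(v) := v + Av$, is the \emph{canonical} admissible basis of $W$, normalised by $\mathrm{pr}_{+} \circ w = \mathrm{id}_{H_{+}}$ and $\mathrm{pr}_{-} \circ w = A$. With respect to this normalisation the tau-function is, by its very definition (cf.\ \cite[Section~3]{SW}), the Fredholm determinant
\[
\tau_{W}(\mathbf{t}) = \det\bigl( \mathrm{pr}_{+} \circ g^{-1} \circ w \bigr)
\]
of an operator on $H_{+}$; in particular $\tau_{H_{+}} \equiv 1$, which matches $A = 0$.

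Next I would insert the block form $g^{-1} = \bigl(\begin{smallmatrix} a & b \\ 0 & d \end{smallmatrix}\bigr)$ along $H = H_{+} \oplus H_{-}$ given in the statement. For $v \in H_{+}$ one computes $g^{-1}w(v) = g^{-1}(v + Av) = (av + bAv) + dAv$, hence $\mathrm{pr}_{+} \circ g^{-1} \circ w = a + bA = a\,(\mathrm{id}_{H_{+}} + a^{-1}bA)$ as operators on $H_{+}$. By \cite[Proposition~2.3]{SW} the operator $a$ is invertible and $b$ is of trace class, so $a^{-1}bA$ and $a - \mathrm{id}_{H_{+}}$ are trace class, every determinant in sight is defined, and multiplicativity of the Fredholm determinant on $\mathrm{id} + \{\text{trace class}\}$ yields
\[
\tau_{W}(\mathbf{t}) = \det(a)\,\det(\mathrm{id}_{H_{+}} + a^{-1}bA).
\]

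It then remains to check $\det(a) = 1$. Since $g^{-1}$ acts as multiplication by $z \mapsto \exp\!\bigl(-\sum_{k \geqslant 1} t_{k} z^{k}\bigr)$, a power series in $z$ with constant term $1$, in the orthonormal basis $\{ z^{j} \}_{j \geqslant 0}$ of $H_{+}$ the operator $a$ is unitriangular with trace-class off-diagonal part, so $\det a = \exp\bigl(\mathrm{tr}\log a\bigr) = 1$. Substituting this into the last display gives the claimed formula. I expect the main (and only) obstacle to be the verification that each operator stays in the right Schatten class so that the Fredholm determinants are well defined and multiply as stated, together with pinning down the normalisation of $\tau_{W}$; both are handled by the structure of $\Gamma_{+}$ recorded immediately before the statement.
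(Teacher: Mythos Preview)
The paper does not supply a proof of this proposition: it is stated as a citation of \cite[Proposition~3.3]{SW} and used as a black box (see also the later recall ``see \cite[Proposition~3.3 and p.~50--51]{SW}''). There is therefore no proof in the paper to compare your argument against.

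That said, your sketch is a faithful reconstruction of the Segal--Wilson argument and is correct in outline: writing $W=\mathrm{graph}(A)$ via the canonical admissible basis $w(v)=v+Av$, computing $\mathrm{pr}_{+}\circ g^{-1}\circ w = a+bA = a(\mathrm{id}+a^{-1}bA)$, and then using that $a$ is unitriangular (multiplication by a power series in $z$ with constant term $1$) so that $\det a=1$. The only point to be slightly careful about is the precise definition of $\tau_{W}$ in \cite{SW}, which is given via the determinant line bundle rather than directly as $\det(\mathrm{pr}_{+}\circ g^{-1}\circ w)$; one must check that the chosen normalisation agrees with the one yielding $\tau_{H_{+}}\equiv 1$, but this is exactly what Segal and Wilson do, and you have flagged it.
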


\subsection{Krichever's construction}\label{Krichever_Constr}

In connection with algebraic geometry
and infinite-dimensional integrable systems,
a fundamental observation / construction of Krichever \cite{Kr1, Kr2, Kr3}
states the following.
A solution of the KdV equation is associated with each
non-singular algebraic curve, equipped with some additional
algebro-geometric data.
Segal and Wilson~\cite{SW} developed and formalised, after a remark by Mumford~\cite{Mu}, this construction further.

The specific algebro-geometric datum
is given by a quintuple $( X, \EuScript{L}, x_{\infty}, z, \varphi )$, consisting of the following parts. $X$ is a complete, irreducible and complex algebraic curve with
a rank-one, torsion-free coherent sheaf $\EuScript{L}$.
Additionally, a non-singular point $x_{\infty} \in X$, and a closed neighbourhood $X_{\infty}$, are chosen, such that there exists a
local parameter
$1/z \colon X_{\infty}\to\overline{\mathbb{D}}
\subset \widehat{\mathbb{C}}$,
with $x_{\infty}\mapsto0$,
and a trivialisation
$\varphi \colon
\EuScript{L}\vert_{X_{\infty}}\to
\overline{\mathbb{D}} \times \mathbb{C}$,
of $\EuScript{L}\vert_{X_{\infty}}$.
Each section of $\EuScript{L}\vert_{X_{\infty}}$ is
identified with a complex function on $\overline{\mathbb{D}}$
under~$\varphi$.
For $X_{0} := X \setminus \interior{X}_{\infty}$, with $ \interior{X}_{\infty}$ the interior of $ X_{\infty}$,
the closed sets
$X_{0}$ and $X_{\infty}$ cover $X$,
and
$X_{0} \cap X_{\infty}$
is identified with $S^{1}$ under $z$.

Given this algebro-geometric datum,
one can associate a closed subspace
$W \subset H$, consisting
of all analytic functions $S^{1} \to \mathbb{C}$
which, under the above identification,
extend to a holomorphic section of $\EuScript{L}$ on
an open neighbourhood of $X_{0}$.
More explicitly,
one can write
\begin{equation*}
W=
\overline{
\left\{
	\begin{matrix}
	\text{the second component} \\
	\text{of $\varphi \circ s \circ (1/z)^{-1}\vert_{S^{1}}$}
	\end{matrix}
\colon
	\begin{matrix}
	\text{$s$ is a holomorphic section} \\
	\text{on a neighbourhood of $X_{0}$}
	\end{matrix}
\right\}
}^{H} ,
\end{equation*}
where
$
(1/z)^{-1}\colon
\overline{\mathbb{D}}
\to
X_{\infty}
$
is the inverse function of $1/z$.
It is known that
$W \in \mathrm{Gr}$
(see \cite[Proposition~6.1]{SW}),
and if $X$ is a compact Riemann surface
(then $\EuScript{L}$ is automatically a complex line bundle,
hence a maximal torsion-free sheaf),
this correspondence
$( X, \EuScript{L}, x_{\infty}, z, \varphi ) \mapsto W \in \mathrm{Gr}$
is one-to-one
(see \cite[Proposition~6.2]{SW}).

\subsection{The appearance of Faber polynomials and Grunsky coefficients}\label{FabGrunsky}

Let
$f\colon \mathbb{D} \to \mathbb{C}$
be a univalent function
such that $f(0) = 0$, and
$f(\mathbb{D})$ is bounded by a Jordan curve.
We set
$
\beta \colon \widehat{\mathbb{C}} \to \widehat{\mathbb{C}}
$
by $\beta (w) := 1/w$.
For a subset $A \subset \widehat{\mathbb{C}}$,
we shall write $A^{-1} := \beta (A)$, and
let
$
\widehat{\mathbb{D}}_{\infty}
:=
\widehat{\mathbb{C}} \setminus \overline{\mathbb{D}}
$.
We obtain an algebro-geometric datum
$
( X, \EuScript{L}, x_{\infty}, z, \varphi )
$
by setting
$X= \widehat{\mathbb{C}}$,
$\EuScript{L} = \widehat{\mathbb{C}} \times \mathbb{C}$,
$
x_{\infty} := \infty
$,
$X_{\infty}:=f \big( \overline{\mathbb{D}} \big)^{-1}$, $z:= \beta \circ f^{-1} \circ\beta^{-1} \colon X_{\infty}\to\widehat{\mathbb{D}}_{\infty}$,
and
$\varphi = (1/z) \times \mathrm{id}_{\mathbb{C}}$.
Correspondingly, we have
$X_{0}= \widehat{\mathbb{C}}
\setminus \big( f( \mathbb{D} )^{-1} \big)$.
Further, by the Caratheodory extension theorem, $z$ extends continuously to
$X_{\infty}$, and therefore we can embed $f$, by assigning
a Hilbert space $W = W_{f}$ to it, into the Grassmannian.
In this case, we have $\widehat{\mathbb{C}}\setminus\big( f( \mathbb{D} )^{-1} \big)$, and hence
\begin{equation*}
W_{f}
=
\overline{
\left\{
F \circ ( 1/z )^{-1} \vert_{S^{1}} \colon
\begin{matrix}
\text{$F$ is a holomorphic function} \\
\text{on a neighbourhood of $
\widehat{\mathbb{C}}
\setminus
\big( f( \mathbb{D})^{-1} \big)
$}
\end{matrix}
\right\}
}^{H} .
\end{equation*}
In order to start this paper's main calculation,
let us specify this more explicitly.
For a closed subset $V$ in $\widehat{\mathbb{C}}$,
we denote by
$\mathcal{O} (V)$
the space of all holomorphic functions
defined on an open neighbourhood of~$V$.
For a univalent function
$
g\colon
\widehat{\mathbb{D}}_{\infty}
\to
\widehat{\mathbb{C}},
$
with $g(\infty ) = \infty$,
and for each
$h \in \mathcal{O} \big( \overline{\mathbb{D}} \big)$,
we call
\begin{equation*}
( \EuScript{F}[h]) ( z )
:=
\frac{1}{ 2 \pi i }
\int_{ \partial ( \mathbb{C} \setminus g( \mathbb{D}_{\infty} ) ) }
\frac{ h \big( g^{-1} ( \xi ) \big) }{ \xi - z }
\mathrm{d} \xi ,
\qquad
z \in \mathbb{C} \setminus \overline{g( \mathbb{D}_{\infty} )}
\end{equation*}
the
{\it Faber transform}
of~$h$
(with respect to~$g$).
If the boundary
$\partial ( \mathbb{C} \setminus g( \mathbb{D}_{\infty} ) )$
is analytic,
it is known that
$h \in \mathcal{O} \big( \overline{\mathbb{D}} \big)$
iff
$\EuScript{F} h \in \mathcal{O} ( \mathbb{C} \setminus g( \mathbb{D}_{\infty} ) )$
(see \cite[Theorem~1]{Jo})
and
$
\EuScript{F}\colon
\mathcal{O} \big( \overline{\mathbb{D}} \big)
\to
\mathcal{O} ( \mathbb{C} \setminus g( \mathbb{D}_{\infty} ) )
$
is bijective.
In our case,
we put
\begin{equation*}
g := ( 1/z )^{-1} = \beta \circ f \circ \beta^{-1}
\colon \ \widehat{\mathbb{D}}_{\infty}
\to f(\mathbb{D})^{-1},
\end{equation*}
and then we can describe
$\mathcal{O} ( X_{0} )$
by
$\mathcal{O} \big( \widehat{\mathbb{D}}_{\infty} \big)$
through the transformation
\begin{equation*}
\EuScript{F}
\circ \big( \beta^{-1} \big)^{*}
=
\big( \beta^{-1} \big)^{*}
\circ
\mathrm{Ad}_{\beta^{*}} ( \EuScript{F} )\colon \
\mathcal{O} \big( \widehat{\mathbb{D}}_{\infty} \big)
\to
\mathcal{O} ( X_{0} ),
\end{equation*}
where
$\mathrm{Ad}_{\beta^{*}} ( \EuScript{F} )
:=
\beta^{*}
\circ \EuScript{F}
\circ \big( \beta^{-1} \big)^{*} \colon
\mathcal{O} \big( \widehat{\mathbb{D}}_{\infty} \big)
\to
\mathcal{O} \big( \widehat{\mathbb{C}} \setminus f(\mathbb{D}) \big)
$.
A direct calculation shows that for each
$
h ( \eta ) = \sum_{k=0}^{\infty} a_{k} \eta^{-k}
\in
\mathcal{O} \big( \widehat{\mathbb{D}}_{\infty} \big)
$,
we have
\begin{equation*}
( \mathrm{Ad}_{\beta^{*}} ( \EuScript{F} ) [h] )( w )
=
\frac{1}{2\pi i}
\int_{ \partial f(\mathbb{D}) }
\frac{ h \big( f^{-1} ( \zeta ) \big) }{ 1 - \zeta w^{-1} }
\frac{ \mathrm{d} \zeta }{ \zeta },
\qquad
w \in \widehat{\mathbb{C}} \setminus f(\mathbb{D}).
\end{equation*}
As a result,
$( \mathrm{Ad}_{\beta^{*}} ( \EuScript{F} )[h])( w )$
is a power series in $1/w$.
Actually, in view of the Cauchy integral formula
\begin{equation*}
\frac{1}{2 \pi i}
\int_{S^{1}}
\frac{ \zeta^{n} }{ 1 - \zeta \eta^{-1} }
\frac{ \mathrm{d} \zeta }{ \zeta }
=
 \begin{cases}
\eta^{n} & \text{if $n \leqslant 0$}, \\
0 & \text{if $n \geqslant 1$},
\end{cases}
\qquad
\eta \in \mathbb{D}_{\infty},
\end{equation*}
we have
\begin{equation*}
\begin{split}
( \mathrm{Ad}_{\beta^{*}} ( \EuScript{F} )
[h]
)( w )
=
\sum_{k=0}^{n}
\frac{a_{k}}{2\pi i}
\int_{ \partial X_{0} }
\frac{ \big( f^{-1} ( \zeta ) \big)^{-k} }{ 1- \zeta w^{-1} }
\frac{ \mathrm{d} \zeta }{ \zeta }
=
\sum_{k=0}^{n}
a_{k}
\big[ \big( f^{-1} (w) \big)^{-k} \big]_{\leqslant 0},
\end{split}
\end{equation*}
where
$\big[ \big( f^{-1} (w) \big)^{-k} \big]_{\leqslant 0}$
denotes the
constant-part
plus the
principal-part
of the Laurent series for
$\big( f^{-1} (w) \big)^{-k} = \big( 1 / f^{-1} (w) \big)^{k}$; hence every element in
$\mathcal{O} \big( \widehat{\mathbb{C}} \setminus f(\mathbb{D}) \big)$
can be written as a series in~$1/w$.
The quantity
\begin{equation}\label{Faber_alt}
Q_k(w):=\frac{1}{2\pi i}\int_{ \partial X_{0} }
\frac{ \big( f^{-1} ( \zeta ) \big)^{-k} }{ 1- \zeta w^{-1} }
\frac{ \mathrm{d} \zeta }{ \zeta }=\big[ \big( f^{-1} (w) \big)^{-k} \big]_{\leqslant 0},
\end{equation}
for $k\in\mathbb{N}$, is called the {\it $k$-th Faber polynomial} associated to the domain
$\mathbb{C} \setminus \overline{f(\mathbb{D})}$
(or simply to~$f$), and it is a polynomial of degree $k$ in $1/w$, cf.\ also~Section~{\ref{Grunsky}}.\\
We conclude that
$\big[ \big(\beta^{-1}\big)^{*} \circ \mathrm{Ad}_{\beta^{*}} (h) \big]
\circ(1/z)^{-1}
=[ \mathrm{Ad}_{\beta^{*}} (h) ]
\circ f \circ \beta^{-1}$, and hence
\begin{equation*}
W_{f}
=
\overline{
	\operatorname{span}
	\big(
	\{ 1 \}
	\cup
	\{
		Q_{n} \circ f \circ (1/z) \vert_{S^{1}}
	\}_{n \geqslant 1}
	\big)
}^{H} ,
\end{equation*}
where $z$ is the identity map on
$\widehat{\mathbb{D}}_{\infty}$; note, if $f(z) \equiv z$ then $W_{f} = H_{+}$.

\begin{Rm} \quad
\begin{itemize}\itemsep=0pt
\item[(a)]
The Faber polynomials appeared first
(with a different formalism,
but equivalent to our presentation)
in the context
of approximations of functions in one complex variable
by analytic functions
(see~\cite{El} and~\cite{Fa}).
Since then, they also play an important role
in the theory of univalent functions
(see~\cite{Sc}).
We introduced the Faber polynomials in a slightly
non-standard way in order to have them in a form which is suitable for embedding univalent functions into the Grassmannian
by using Faber polynomials.

\item[(b)]
In the context of Abelian function theory,
the exterior derivatives
\begin{equation*}
\omega_{\infty}^{(n)} := \mathrm{d} Q_{n} (f(1/z)),
\end{equation*}
$n=1,2,\ldots$
are known as Abelian differentials of the second kind on the Riemann sphere.
In general, Krichever's embedding of the algebro-geometric datum
$(X, \EuScript{O}, Q, z, \varphi )$,
where
\begin{equation*}
(X, \alpha_{1}, \ldots , \alpha_{g}, \beta_{1}, \ldots , \beta_{g})
\end{equation*}
is a homologically marked compact Riemann surface with genus~$g$,
$\EuScript{O}$ is the structure sheaf of~$X$,
$Q \in X$,
$z$ and $\varphi$
are local uniformisers, and a local trivialisation of~$\EuScript{O}$,
is described by using
multivalued meromorphic functions
$\varphi^{(0)} (z) \equiv 1$,
\begin{equation*}
\varphi^{(n)} (z) := \int^{z} \omega_{Q}^{(n)}
=:
z^{n} - \sum_{m=1}^{\infty} q_{nm} \frac{z^{-m}}{m},
\end{equation*}
(modulo periods)
where
$\omega_{Q}^{(n)}$'s are
(normalised)
abelian differentials
of the second kind
\cite[Section~2.27 and p.~304]{KNTY}.
These multivalued meromorphic functions
can be regarded as a~generalisation of the Faber polynomials
(see \cite[p.~131]{Ya80}).

\item[(c)]
Given again a homologically marked compact Riemann surface
$\big(\!X, (\alpha_{i}, \beta_{i})_{i=1}^{g}\!\big)$
with genus~$g$,
Krichver's embedding of yet another datum
$\big(X, \Omega^{1/2}, Q, z, \sqrt{\mathrm{d}z} \big)$
or
\begin{equation*}
\big(X, \Omega^{1/2} \otimes \EuScript{L}_{c}, Q, z, \sqrt{\mathrm{d}z} \otimes s_{c} \big)
\end{equation*}
is described in \cite[equation~(2.34)]{KNTY}.
Here,
$\Omega^{1/2}$
is the so-called
{\it theta characteristic}
of the compact Riemann surface~$X$,
$\EuScript{L}_{c}$
is a complex line bundle of degree $0$ parametrised by
$c \in \mathbb{C}^{g}$
(modulo the lattice associated to $(\alpha_{i} , \beta_{i})_{i=1}^{g}$),
and
$s_{c}$ is a local trivialisation of $\EuScript{L}_{c}$.
In particular, the embedding of the latter
and the associated Fermionic state
(the image under the Pl\"{u}cker embedding)
are described by means of the Szeg\H{o} kernel of
$\Omega^{1/2} \otimes \EuScript{L}_{c}$
(see~\cite{AGMV, KNTY}, in which, the
{\it scattering operator}
in \cite[Section~5.12]{KNTY}
is a special case of a
{\it Bogoliubov transformation}
discussed in
\cite[equations~(2.15)--(2.20)]{AGMV}),
and then the corresponding tau-function
$\tau (\mathbf{t})$
is described as a
theta function multiplied by
$\exp \big( \sum_{n,m=1}^{\infty} q_{nm}t_{n}t_{m} \big)$
(see \cite[Theorem~5.6]{KNTY}).
\end{itemize}
\end{Rm}

\subsection{Action of words in Witt algebra generators}

Let $X=\{ x_{1}, x_{2}, x_{3}, \ldots \}$
be an alphabet, consisting of a
countable set of
non-commuting letters.
The free monoid~$X^{*}$ on~$X$ is the set of
all words in the letters~$X$, including the empty word $\varnothing$.
We denote by
\begin{equation*}
\mathbb{C} \langle X \rangle
:=
\bigoplus_{w \in X^{*}} \mathbb{C}w
=
\mathbb{C}
\oplus
\bigoplus_{n=1}^{\infty}
\mathbb{C} \langle X \rangle_{n}
\end{equation*}
the free associative and unital $\mathbb{C}$-algebra on $X$.
The unit of this algebra is the empty word which we will
denote by $1 := \varnothing$.
The set $\mathbb{C} \langle X \rangle_{n}$
stands for $\bigoplus_{|w|=n} \mathbb{C} w$ where the summation
is taken over all words $w$ of length $n$.

\begin{Def}
We define
\begin{equation*}
\xi ( \mathbf{x}, z )
:=
\sum_{n=1}^{\infty }
x_{k} z^{k}
\in
\mathbb{C} \langle X \rangle [\![ z ]\!],
\end{equation*}
and a distinguished element
$S( \xi ( \mathbf{x}, z ) )\in \mathbb{C} \langle X \rangle [\![ z ]\!]$
by
\begin{equation*}
S( \xi ( \mathbf{x}, z ) )
:=1+\sum_{n=1}^{\infty}
z^{n}
\sum_{p=1}^{n}
\sum_{
	\substack{
		i_{1}, \ldots , i_{p} \in \mathbb{N}\colon \\
		i_{1} + \cdots + i_{p} = n
	}
}
\!\!\!\!\!\! x_{i_{1}} \cdots x_{i_{p}} .
\end{equation*}
\end{Def}
\begin{Def}Let
$x_{0}\colon [0,+\infty ) \to \mathbb{R}$
and
$x_{1}, x_{2}, \ldots\colon [0,+\infty ) \to \mathbb{C}$
be continuous and of bounded variation.
For $0 \leqslant s \leqslant t$,
we define
$
[ \int 1 ]_{s,t}
:=
1
$
and
\begin{gather*}
\left[
	\int \!( x_{i_{p}} \cdots x_{i_{2}} x_{i_{1}} )
\right]_{s,t}\!
:=
\int \displaylimits_{s \leqslant u_{1} < u_{2} < \cdots < u_{p} \leqslant t}\!\!\!\!\!\!\!\!\!\!\!\!\!\!\!\!\!
\mathrm{e}^{ -i_{1} x_{0} (u_{1}) }
\mathrm{d} x_{i_{1}} (u_{1})
\mathrm{e}^{ -i_{2} x_{0} (u_{2}) }
\mathrm{d} x_{i_{2}} (u_{2})
\cdots
\mathrm{e}^{ -i_{p} x_{0} (u_{p}) }
\mathrm{d} x_{i_{p}} (u_{p}) .
\end{gather*}
The action of $\int$ naturally extends to
$
\mathbb{C} \langle X \rangle [\![ z ]\!],
$
and then we call
\begin{equation*}
\begin{split}
S( \xi (\mathbf{x}, z) )_{s,t}
:=
\left[
	\int S( \xi (\mathbf{x}, z) )
\right]_{s,t},
\end{split}
\end{equation*}
the {\it signature} of $\xi ( \mathbf{x}, z )$.
\end{Def}

We define a bilinear map
$
T\colon
\mathbb{C} \langle X \rangle \big(\!\big( z^{-1} \big)\!\big)
\times
\mathbb{C} \langle X \rangle
\to
\mathbb{C} \langle X \rangle \big(\!\big( z^{-1} \big)\!\big),
$
by extending the pairing $T(f,1):=f$, and
$
T(f,x_{i_{p}} \cdots x_{i_{1}}):=( L_{-i_{1}} \cdots L_{-i_{p}} f )
x_{i_{p}} \cdots x_{i_{1}},
$
bilinearly, for
$
f \in \mathbb{C} \langle X \rangle \big(\!\big( z^{-1} \big)\!\big)
$,
$p \geqslant 1$,
and $i_{1}, \ldots , i_{p} \in \mathbb{N}$.
Further,
$
L_{k} :=
- z^{k+1} \partial / (\partial z),
$
for $k \leqslant -1$,
forms the negative part of the Witt algebra, cf.~(\ref{Witt_alg}), and
$\partial / (\partial z)$
is a formal derivation on
$\mathbb{C} \langle X \rangle \big(\!\big( z^{-1} \big)\!\big)$.

For
$f \in \mathbb{C} \langle X \rangle \big(\!\big( z^{-1} \big)\!\big)$
and
$x \in \mathbb{C} \langle X \rangle$, in the sequel,
$T(f,x)$, will be denoted by~$f._{z}x$.
The following is clear by definition:

\begin{Prop}$T$ defines an action of the $\mathbb{C}$-algebra
$\mathbb{C} \langle X \rangle$
on $\mathbb{C} \langle X \rangle \big(\!\big( z^{-1} \big)\!\big)$
from the right.
\end{Prop}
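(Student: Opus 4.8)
The plan is to check directly that $T$ satisfies the two defining axioms of a right action of the unital associative algebra $\mathbb{C}\langle X\rangle$ on $M:=\mathbb{C}\langle X\rangle(\!(z^{-1})\!)$: namely that $f._{z}1=f$ for all $f\in M$, and that $(f._{z}a)._{z}b=f._{z}(ab)$ for all $f\in M$ and all $a,b\in\mathbb{C}\langle X\rangle$. The first identity is literally part of the definition of $T$. Since $T$ is $\mathbb{C}$-bilinear by construction and the words $X^{*}$ form a $\mathbb{C}$-basis of $\mathbb{C}\langle X\rangle$, and since both sides of the associativity identity are $\mathbb{C}$-bilinear in the pair $(a,b)$, it suffices to verify that identity when $a$ and $b$ are words.

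The one observation that makes this work is that each operator $L_{k}=-z^{k+1}\partial/(\partial z)$ (for $k\leqslant -1$) is $\mathbb{C}\langle X\rangle$-bilinear on $M$. Indeed, writing a general element of $M$ as $\sum_{n}c_{n}z^{-n}$ with $c_{n}\in\mathbb{C}\langle X\rangle$, one has $L_{k}\big(\sum_{n}c_{n}z^{-n}\big)=\sum_{n}n\,c_{n}z^{k-n}$: the operator merely rescales the coefficient by the integer $n$ and shifts the power of the central formal variable $z$, leaving the noncommutative coefficients $c_{n}$ untouched. Hence $L_{k}$ commutes with both left and right multiplication by any element of $\mathbb{C}\langle X\rangle$, and so does any composition $L_{-j_{1}}\cdots L_{-j_{q}}$; in particular $L_{-j_{1}}\cdots L_{-j_{q}}(h\cdot u)=(L_{-j_{1}}\cdots L_{-j_{q}}h)\cdot u$ for every word $u$ and every $h\in M$, where $\cdot$ denotes multiplication in the coefficient algebra.

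Now take words $a=x_{i_{p}}\cdots x_{i_{1}}$ and $b=x_{j_{q}}\cdots x_{j_{1}}$. By definition $f._{z}a=(L_{-i_{1}}\cdots L_{-i_{p}}f)\cdot a$; applying $T(-,b)$ and using the commutation just noted gives
\[
(f._{z}a)._{z}b=\big(L_{-j_{1}}\cdots L_{-j_{q}}L_{-i_{1}}\cdots L_{-i_{p}}f\big)\cdot a\cdot b .
\]
On the other hand $ab=x_{i_{p}}\cdots x_{i_{1}}x_{j_{q}}\cdots x_{j_{1}}$ is again a word, and reading off the recipe in the definition of $T$ for this concatenated word yields precisely the composition $L_{-j_{1}}\cdots L_{-j_{q}}L_{-i_{1}}\cdots L_{-i_{p}}$ applied to $f$, followed by right multiplication by $ab$, so that
\[
f._{z}(ab)=\big(L_{-j_{1}}\cdots L_{-j_{q}}L_{-i_{1}}\cdots L_{-i_{p}}f\big)\cdot(ab),
\]
which is the same element. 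This establishes associativity on words, hence on all of $\mathbb{C}\langle X\rangle$ by bilinearity, and together with $f._{z}1=f$ it shows that $T$ is a right action. The only point requiring any care — and the reason to spell the argument out at all — is matching the order in which the $L$'s are composed against the order of the letters in $ab$; once the $\mathbb{C}\langle X\rangle$-bilinearity of the $L_{k}$ is observed there is no genuine obstacle, in keeping with the assertion that the statement is clear by definition.
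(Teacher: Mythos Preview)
Your proof is correct and simply makes explicit what the paper declares ``clear by definition'' without further argument. The key observation you isolate --- that each $L_{k}$ acts only on the powers of $z$ and hence commutes with right multiplication by $\mathbb{C}\langle X\rangle$ --- together with the careful check that the order of the $L$'s in $f._{z}(ab)$ matches the one obtained from $(f._{z}a)._{z}b$, is exactly what is needed and is presumably what the authors had in mind.
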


The right action $T$ can be extended to the right action
\begin{equation}\label{action1}
\mathbb{C} \langle X \rangle \big(\!\big( w^{-1} \big)\!\big)
\times
\mathbb{C} \langle X \rangle [\![ z ]\!]
\to
\mathbb{C} \langle X \rangle \big(\!\big( w^{-1} \big)\!\big) [\![ z ]\!],
\end{equation}
under which the image of
$
( f, z^{n} x_{i_{p}} \cdots x_{i_{1}} )
$
is mapped to
$
z^{n} ( f._{w} x_{i_{p}} \cdots x_{i_{1}} )
=:
f._{w} ( z^{n} x_{i_{p}} \cdots x_{i_{1}} )
$.
Note that now the notation
$f ._{w} S(\mathbf{x})$
makes sense.

\begin{Thm} 
\label{Sol-by-Witt} 

Let $\{ f_{t} \}_{t \geqslant 0}$ be a solution
to the Loewner--Kufarev equation.
Then
\begin{equation*}
f_{t} (z)
=
\left[
	\int
	\underset{w=0}{\operatorname{Res\ }}
	\left(
	\frac{ \mathrm{e}^{x_{0}(t)} z }{ 1 - z w }
	\big(
		w^{-1} ._{w} S\big( \xi \big( \mathbf{x} , \mathrm{e}^{x_{0}(t)} \big) \big)
	\big)
	\right)
\right]_{0,t}.
\end{equation*}
\end{Thm}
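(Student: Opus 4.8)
The plan is to expand both sides as power series in $z$ and to match them coefficientwise against the explicit formula for the Taylor coefficients of $f_t$ furnished by Theorem~\ref{explicit-Taylor}. By Proposition~\ref{CLK-Hiera} we have $C(t):=f_t'(0)=\mathrm{e}^{x_0(t)}$ (recall $x_0(0)=0$) and $f_t(z)=C(t)\sum_{n\geqslant 0}c_n(t)z^{n+1}$ with $c_0\equiv 1$, the series converging near $0$ by Corollary~\ref{loc-univ}; so it suffices to prove the identity as an identity of formal power series in $z$, i.e.\ to check that the coefficient of $z^{n+1}$ on the right-hand side equals $C(t)c_n(t)$ for every $n\geqslant 0$.

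First I would carry out the purely algebraic step, working in $\mathbb{C}\langle X\rangle(\!(w^{-1})\!)[\![z]\!]$ \emph{before} the integration functional $[\int\,\cdot\,]_{0,t}$ is applied. The key point is the action of the negative Witt generators on powers of $w$: since $L_{-i}w^{-k}=-w^{-i+1}\partial_w(w^{-k})=k\,w^{-i-k}$, a short induction gives, for a composition $i_1+\dots+i_p=n$,
\[
w^{-1}._w(x_{i_1}\cdots x_{i_p})=(L_{-i_p}\cdots L_{-i_1}w^{-1})\,x_{i_1}\cdots x_{i_p}=\Big(\prod_{l=1}^{p-1}(i_1+\dots+i_l+1)\Big)w^{-n-1}\,x_{i_1}\cdots x_{i_p}.
\]
Substituting $z=\mathrm{e}^{x_0(t)}$ into the definition of $S(\xi(\mathbf{x},z))$, applying this identity termwise, multiplying by $\tfrac{\mathrm{e}^{x_0(t)}z}{1-zw}=\mathrm{e}^{x_0(t)}z\sum_{k\geqslant 0}z^kw^k$, and taking $\operatorname{Res}_{w=0}$ (which extracts the coefficient of $w^{-1}$, so the summand carrying $w^{-n-1}$ picks up the coefficient $z^{n}$ of the geometric series), one gets
\[
\operatorname{Res}_{w=0}\!\Big(\tfrac{\mathrm{e}^{x_0(t)}z}{1-zw}\big(w^{-1}._wS(\xi(\mathbf{x},\mathrm{e}^{x_0(t)}))\big)\Big)=\mathrm{e}^{x_0(t)}z+\sum_{n\geqslant 1}\mathrm{e}^{(n+1)x_0(t)}z^{n+1}\sum_{p=1}^{n}\sum_{i_1+\dots+i_p=n}\Big(\prod_{l=1}^{p-1}(i_1+\dots+i_l+1)\Big)x_{i_1}\cdots x_{i_p}.
\]

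Next I would apply $[\int\,\cdot\,]_{0,t}$ termwise, pulling the scalars $\mathrm{e}^{(n+1)x_0(t)}$ out, and compare with Theorem~\ref{explicit-Taylor}. By definition $[\int x_{i_1}\cdots x_{i_p}]_{0,t}=\int_{0\leqslant u_1<\dots<u_p\leqslant t}\mathrm{e}^{-i_px_0(u_1)}\mathrm{d}x_{i_p}(u_1)\cdots\mathrm{e}^{-i_1x_0(u_p)}\mathrm{d}x_{i_1}(u_p)$; under the reversal bijection $(i_1,\dots,i_p)\mapsto(i_p,\dots,i_1)$ on compositions of $n$ this iterated integral turns into exactly the one appearing in Theorem~\ref{explicit-Taylor}, and the weight $\prod_{l=1}^{p-1}(i_1+\dots+i_l+1)$ turns into $\prod_{l=1}^{p-1}\big((n-(i_1+\dots+i_l))+1\big)=\widetilde w(n)_{i_1,\dots,i_p}$, while the prefactor $\mathrm{e}^{(n+1)x_0(t)}$ splits as $\mathrm{e}^{x_0(t)}$ times the $\mathrm{e}^{nx_0(t)}$ that sits outside the integral in the formula for $c_n(t)$. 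Hence the coefficient of $z^{n+1}$ on the right-hand side equals $\mathrm{e}^{x_0(t)}c_n(t)=C(t)c_n(t)$ for $n\geqslant 1$, and the coefficient of $z$ equals $\mathrm{e}^{x_0(t)}=C(t)=C(t)c_0(t)$; summing over $n\geqslant 0$ and invoking the parametrisation~(\ref{parametrization}) gives $C(t)\sum_{n\geqslant 0}c_n(t)z^{n+1}=f_t(z)$.

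The hard part is bookkeeping rather than anything conceptual: one must keep straight the three order-reversing conventions in play --- the order in which $L_{-i_1},\dots,L_{-i_p}$ act in the definition of $T$, the time ordering built into $[\int\,\cdot\,]_{s,t}$, and the order in which the letters $x_{i_1},\dots,x_{i_p}$ are produced inside $S(\xi)$ --- and then verify that the composition-reversal bijection indeed identifies $\prod(i_1+\dots+i_l+1)$ with $\widetilde w(n)$. A secondary point is justifying the formal manipulations: the identity preceding the application of $[\int\,\cdot\,]$ lives in $\mathbb{C}\langle X\rangle(\!(w^{-1})\!)[\![z]\!]$, where residue, multiplication by $\sum_k z^kw^k$, and the $[\![z]\!]$-extension of $._w$ are all well defined; after applying $[\int\,\cdot\,]_{0,t}$ each coefficient is a finite sum of finite-variation iterated integrals (finiteness of the relevant signed measures being Remark~\ref{limit-meas}(c)), and the resulting series in $z$ converges near $0$ by Corollary~\ref{loc-univ}, so once the coefficientwise agreement with Theorem~\ref{explicit-Taylor} is in hand there is nothing further to do.
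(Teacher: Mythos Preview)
Your proof is correct and follows essentially the same route as the paper's: compute the action of the negative Witt generators on $w^{-1}$, extract the residue to get an explicit series in $z$ with word-valued coefficients, apply $[\int\cdot]_{0,t}$, and match against Theorem~\ref{explicit-Taylor}. The only cosmetic difference is where the composition-reversal is performed: the paper absorbs it at the outset by computing $w^{-1}._w x_{i_p}\cdots x_{i_1}$ directly and recognising the coefficient as $\widetilde w(n)_{i_1,\dots,i_p}$, whereas you compute $w^{-1}._w x_{i_1}\cdots x_{i_p}$ and carry out the reversal explicitly at the end; your extra paragraph on justifying the formal manipulations is not in the paper but is harmless and accurate.
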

\begin{proof}
By setting
\begin{equation*}
\widetilde{w}(n)_{i_{1},\ldots , i_{p}}
:=
\big\{ (n-i_{1}) + 1 \big\}
\big\{ (n-(i_{1}+i_{2})) + 1 \big\}
\cdots
\big\{ \big( n - ( i_{1} + i_{2} + \cdots + i_{p-1} ) \big) + 1 \big\},
\end{equation*}
where $n = i_{1} + \cdots + i_{p}$, we have
\begin{gather*}
w^{-1} ._{w} 1 = w^{-1}, \\
w^{-1} ._{w} x_{i_{p}} \cdots x_{i_{1}}=
\widetilde{w}(n)_{i_{1},\ldots , i_{p}}
x_{i_{p}} \cdots x_{i_{1}}
w^{ - ( i_{1} + \cdots + i_{p} + 1 ) }.
\end{gather*}
Therefore
$\underset{w=0}{\operatorname{Res\ }}\big(
	\sum_{m=0}^{\infty}
	z^{m} w^{m}
	\big( w^{-1} ._{w} 1 \big)
\big)
=
1
~(\text{i.e., the empty word $\varnothing$} )
$,
and
\begin{equation*}
\underset{w=0}{\operatorname{Res\ }}
\left(
	\sum_{m=0}^{\infty}
	z^{m} w^{m}
	\big( w^{-1} ._{w} x_{i_{p}} \cdots x_{i_{1}} \big)
\right)
=
z^{ ( i_{1} + \cdots + i_{p} ) }
\widetilde{w}(n)_{i_{1},\ldots , i_{p}}
x_{i_{p}} \cdots x_{i_{1}} .
\end{equation*}
Hence we get
\begin{gather*}
\underset{w=0}{\operatorname{Res\ }}
\left(
	\frac{ \mathrm{e}^{ x_{0} (t) } z }{ 1 - z w }
	\big(
		w^{-1} ._{w} S\big( \xi \big( \mathbf{x}, \mathrm{e}^{ x_{0} (t) } \big) \big)
	\big)
\right) \\
\qquad{} =
\mathrm{e}^{ x_{0} (t) }
z
+
\sum_{n=1}^{\infty}
\mathrm{e}^{ (n+1) x_{0} (t) }
z^{ n+1 }
\sum_{p=1}^{n}
\sum_{
	\substack{
		i_{1}, \ldots , i_{p} \in \mathbb{N}\colon \\
		i_{1} + \cdots + i_{p} = n
	}
}
\widetilde{w}(n)_{i_{1},\ldots , i_{p}}
x_{i_{p}} \cdots x_{i_{1}} .
\end{gather*}
Now, in view of Theorem \ref{explicit-Taylor}, we obtain the result.
\end{proof}

By tensoring the right action~(\ref{action1}) this gives rise to
\begin{equation*}
\big(
\mathbb{C} \langle X \rangle \big(\!\big( w^{-1} \big)\!\big)
\otimes
\mathbb{C} \langle X \rangle \big(\!\big( u^{-1} \big)\!\big)
\big)
\times
\big(
\mathbb{C} \langle X \rangle [\![ z ]\!]
\otimes
\mathbb{C} \langle X \rangle [\![ z ]\!]
\big)
\to
\mathbb{C} \langle X \rangle \big(\!\big( w^{-1} \big)\!\big)
\otimes
\mathbb{C} \langle X \rangle \big(\!\big( u^{-1} \big)\!\big) ,
\end{equation*}
under which the image of
$( f \otimes g , x \otimes y )$
will be denoted by
$( f._{w} x ) \otimes ( g._{u} y )$
in the sequel.

We recall (see \cite[Proposition~3.3 and pp.~50--51]{SW})
that the tau-function
corresponding to $W \in \mathrm{Gr}$,
is given by
\begin{equation*}
\tau_{W} (\mathbf{t})=\det( w_{+} )=\det \big( 1 + a^{-1} b A \big),
\end{equation*}
up to a multiplicative constant,
where
$w_{+} \colon \mathrm{e}^{ \xi (\mathbf{t},z) }W \to H_{+}$,
is the orthogonal projection, and
$\mathrm{e}^{ \xi (\mathbf{t},z) } \colon H \to H$,
is the multiplication operator by
$\mathrm{e}^{ \xi (\mathbf{t},z) }$,
with matrix representation
\begin{equation*}
\mathrm{e}^{ - \xi (\mathbf{t},z) }
=
\left(\begin{matrix}
a & b \\
0 & d
\end{matrix}\right)
\qquad
\text{along $H = H_{+} \oplus H_{-}$},
\end{equation*}
and
$A \colon H_{+} \to H_{-}$
is such that
$\mathrm{graph}(A) = W$.
Given a
bounded
univalent function
$f\colon \mathbb{D} \to \mathbb{C}$,
with $f(0) = 0$,
we denote by
$A_{f}\colon H_{+} \to H_{-}$
the linear map such that
$\mathrm{graph} (A_{f}) = W_{f}$.

\begin{Thm}\label{Tau-by-Witt} Let
$\{ f_{t} \}_{0 \leqslant t \leqslant T}$
be a univalent solution to the Loewner--Kufarev equation
such that~$f_{t} (\mathbb{D})$ is bounded for every $t \in [ 0, T ]$.
Then for each $h \in H_{+}$ and $\vert z \vert > 1$,
we have
\begin{gather*}
( A_{f_{t}} h ) (z)=
\Bigg[
\int
\underset{\substack{w=0, \\ u=0}}{\operatorname{Res\ }}
\Bigg(
	\frac{ h^{\prime}(u) }{ w-z }
	\sum_{r,s=1}^{\infty}
	\mathrm{e}^{ (r+s) x_{0} (t) }
	x_{r+s}
	\big( w^{-r}._{w} S\big( \xi \big( \mathbf{x} , \mathrm{e}^{x_{0}(t)} \big) \big) \big)\\
\hphantom{( A_{f_{t}} h ) (z)=}{}
	\shuffle
	\big( u^{-s}._{u} S\big( \xi \big( \mathbf{x} , \mathrm{e}^{x_{0}(t)} \big) \big) \big)
\Bigg)
\Bigg]_{0,t} .
\end{gather*}
\end{Thm}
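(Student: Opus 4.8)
The plan is to reduce the computation of $A_{f_t}$ to the Grunsky coefficients $b_{-n,-m}(t)$ and then to feed in the explicit description of those coefficients obtained earlier.

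\emph{Step 1: identifying $A_{f_t}$ on the monomials.} Recall that $W_{f_t}$ is the closure of $\mathrm{span}\big(\{1\}\cup\{Q_n(t,f_t(1/z))\vert_{S^1}\}_{n\geqslant1}\big)$, and that, by the definition of the Faber polynomials and Grunsky coefficients already used in the proof of Proposition~\ref{Faber-variation},
$$
Q_n\big(t,f_t(\zeta)\big)=\zeta^{-n}+n\sum_{m=1}^{\infty}b_{-n,-m}(t)\,\zeta^{m}.
$$
Putting $\zeta=1/z$ with $\vert z\vert>1$, the generator $\psi_n(z):=Q_n(t,f_t(1/z))=z^{n}+n\sum_{m\geqslant1}b_{-n,-m}(t)z^{-m}$ has $H_+$-component $z^{n}$ and $H_-$-component $n\sum_{m\geqslant1}b_{-n,-m}(t)z^{-m}$. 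Since $f_t$ is univalent with bounded image, $W_{f_t}\in\mathcal{U}$, so the orthogonal projection $W_{f_t}\to H_+$ is an isomorphism, $A_{f_t}$ is the bounded operator with $\mathrm{graph}(A_{f_t})=W_{f_t}$, and, reading off from $1\in H_+\cap W_{f_t}$ and $\psi_n\in W_{f_t}$, one obtains $A_{f_t}1=0$ and $A_{f_t}z^{n}=n\sum_{m\geqslant1}b_{-n,-m}(t)z^{-m}$. Consequently, for $h(u)=\sum_{n\geqslant0}h_n u^{n}\in H_+$ and $\vert z\vert>1$,
$$
(A_{f_t}h)(z)=\sum_{n,m\geqslant1}b_{-n,-m}(t)\,z^{-m}\,\underset{u=0}{\operatorname{Res\ }}\big(h'(u)u^{-n}\big),
$$
since $\underset{u=0}{\operatorname{Res\ }}\big(h'(u)u^{-n}\big)=nh_n$.

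\emph{Step 2: inserting Proposition~\ref{explicit-Grunsky} and repackaging.} Now I would run the residue/Witt calculus exactly as in the proof of Theorem~\ref{Sol-by-Witt}. The elementary identity $L_{-i}\,w^{-r}=r\,w^{-(r+i)}$ gives, by iteration, that $w^{-r}._{w}(x_{i_1}\cdots x_{i_p})$ equals $w^{-(r+i_1+\cdots+i_p)}\,x_{i_p}\cdots x_{i_1}$ multiplied by a product of integers which is precisely the combinatorial weight $w(\,\cdot\,)_{\,\cdot\,;\varnothing}$ of Proposition~\ref{explicit-Grunsky} evaluated at $m=r+i_1+\cdots+i_p$, up to the order reversal built into the definition of $._{w}$. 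Combined with $\underset{w=0}{\operatorname{Res\ }}\big(w^{-\alpha}/(w-z)\big)=-z^{-\alpha}$ for $\alpha\geqslant1$ and $\vert z\vert>1$, this turns every power $w^{-m}$ produced by the action into the prefactor $z^{-m}$, while the analogous manipulation in the variable $u$ turns $u^{-s}._{u}S(\xi(\mathbf{x},\mathrm{e}^{x_0(t)}))$ paired with $h'(u)$ into $\underset{u=0}{\operatorname{Res\ }}(h'(u)u^{-n})$. Finally, $[\int\,\cdot\,]_{0,t}$ sends a word to a weighted iterated Lebesgue--Stieltjes integral, and, by the very definition of the shuffle product, $\big[\int(W_1\shuffle W_2)\big]_{0,t}$ is exactly the shuffled iterated integral of two words $W_1,W_2$ occurring in Proposition~\ref{explicit-Grunsky}, while the letter $x_{r+s}$ prepended in $x_{r+s}\big(w^{-r}._{w}S(\,\cdot\,)\big)\shuffle\big(u^{-s}._{u}S(\,\cdot\,)\big)$ accounts for the distinguished integral $\int_{0}^{\,\cdot}\mathrm{e}^{-(r+s)x_0}\mathrm{d}x_{r+s}$, that is, for the ``junction'' index $k=r+s$ of that proposition. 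Matching the four groups of terms of Proposition~\ref{explicit-Grunsky} against the four cases of the bilinear expression $S\shuffle S$ (both signatures contribute only their constant term $1$; only the left one does; only the right one does; both contribute nontrivial words), and substituting into the formula of Step~1, yields the asserted identity after interchanging the absolutely convergent summations over $n,m$ with those over words. This also makes $A_{f_t}$ manifestly bilinear in the signature $S(\xi(\mathbf{x},\mathrm{e}^{x_0(t)}))$---the ``quadratic form of the signature'' of the informal statement---so that the tau-function $\det(1+a^{-1}bA_{f_t})$ acquires the claimed determinantal form.

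\emph{Main obstacle and loose ends.} The real work lies in the bookkeeping of Step~2: one has to identify the weights $w(r)_{i_1,\cdots,i_p;\varnothing}$ and $w(r,s)_{i_1,\cdots;j_1,\cdots}$ of Proposition~\ref{explicit-Grunsky} with the products of integers produced by iterating $L_{-i}$, keep the order reversal in the definition of $T$ (hence of $._{w}$) consistent with the ordering of the iterated integrals in $[\int\,\cdot\,]_{0,t}$, and check that the junction letter $x_{r+s}$ lands in the correct innermost integral. A more routine preliminary point, to be settled using the boundedness of $f_t(\mathbb{D}_0)$ together with Corollary~\ref{loc-univ}, is that $W_{f_t}\in\mathcal{U}$, that $A_{f_t}$ is (compact, hence) bounded, and that the series defining $(A_{f_t}h)(z)$ and the displayed double residue converge for $\vert z\vert>1$, so that all the rearrangements above are legitimate.
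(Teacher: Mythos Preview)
Your proposal is correct and follows essentially the same route as the paper's own proof: reduce to the monomial basis $\{u^n\}_{n\geqslant 1}$ of $H_+$, identify $A_{f_t}u^n$ with $n\sum_{m\geqslant 1}b_{-n,-m}(t)z^{-m}$ via the Faber--Grunsky relation $Q_n(t,f_t(\zeta))=\zeta^{-n}+n\sum_{m}b_{-n,-m}(t)\zeta^{m}$, then split the bilinear signature expression into the four cases (both factors contribute $1$; only the left; only the right; neither) and match each against the corresponding block of Proposition~\ref{explicit-Grunsky}. Your Step~1 spells out the identification of $A_{f_t}$ a bit more explicitly than the paper (which simply states the reduced identity and verifies it), and the bookkeeping issues you flag---the order reversal in $._{w}$ versus the ordering of the iterated integrals, and the placement of the junction letter $x_{r+s}$---are exactly the computations the paper carries out term by term in its appendix, so there is no missing idea.
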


The proof can be found in
Appendix~\ref{app:tau-witt}. From this, we obtain
\begin{Cor}
For each $n,m \in \mathbb{N}$, the coefficient $b_{-n,-m}(t)$, is equal to
\begin{gather*}
\Bigg[
	\int
	\underset{\substack{z=0, \\ u=0}}{\operatorname{Res\ }}
	\Bigg\{
	\underset{w=0}{\operatorname{Res\ }}
	\frac{ z^{m-1} u^{n-1} }{ w-z }
	\sum_{r,s=1}^{\infty}
	\mathrm{e}^{ (r+s) x_{0} (t) } x_{r+s}
	\big( w^{-r}._{w} S\big( \xi \big( \mathbf{x}, \mathrm{e}^{x_{0}(t)} \big) \big) \big)\\
\qquad{}
	\shuffle
	\big( u^{-s}._{u} S\big( \xi ( \mathbf{x}, \mathrm{e}^{x_{0}(t)} \big) \big) \big)
	\Bigg\}\Bigg]_{0,t}.
\end{gather*}
\end{Cor}

\appendix
\section{Appendix}\label{Appdx}
\subsection{Proof of Theorem~\ref{explicit-Taylor}}\label{app:explicit-Taylor}

By applying variation of constants to~(\ref{c_n}),
we obtain the following recurence relation
\begin{equation*}
c_{n} (t)
=
\mathrm{e}^{n x_{0} (t)}
\int_{0}^{t} \mathrm{e}^{ -n x_{0} (s) } \mathrm{d} x_{n} (s)
+
\sum_{k=1}^{n-1}
( k+1 )
\mathrm{e}^{n x_{0} (t)}
\int_{0}^{t} \mathrm{e}^{-n x_{0} (s)}
c_{k} (s) \mathrm{d} x_{n-k} (s),
\end{equation*}
for $n \geqslant 2$.
Multiplying by $\mathrm{e}^{-n x_{0}(t)}$,
this transforms to
\begin{equation*}
\mathrm{e}^{-n x_{0}(t)}
c_{n} (t)
=
\int_{0}^{t}
\mathrm{e}^{-n x_{0}(t)}
\mathrm{d} x_{n} (s)
+
\sum_{k=1}^{n-1}
(k+1)
\int_{0}^{t}
\mathrm{e}^{-(n-k) x_{0}(t)}
\mathrm{d} x_{n-k} (s)
\big(
	\mathrm{e}^{-k x_{0}(s)}
	c_{k} (s)
\big) .
\end{equation*}
By assuming
that $x_{1}, x_{2}, \ldots$
are \textit{non-commutative} indeterminates,
and the $c_{n}$'s are polynomials in the $x_{i}$'s,
we shall consider the following equation:
\begin{equation}\label{Taylor_ev}
c_{n}
=
x_{n}
+ 2 x_{n-1} c_{1}
+ 3 x_{n-2} c_{2}
+ \cdots
+ (n-1) x_{2} c_{n-2}
+ n x_{1} c_{n-1},
\end{equation}
for $n \geqslant 1$
(roughly speaking,
the polynomial $c_{n}$
means
$\mathrm{e}^{-n x_{0}(t)} c_{n}(t)$
and
`applying the indeterminate $x_{k}$ from the left'
means `applying
$\int_{0}^{t} \mathrm{e}^{-kx_{0}(s)} \mathrm{d} x_{k}(s) \times $
to functions of~$s$')
and then we shall make some observations about the equation~(\ref{Taylor_ev})
and introduce some notations:
If we apply~(\ref{Taylor_ev}) to~$c_{n}$, we get
\begin{itemize}\itemsep=0pt
\item[(a)]
The terms
$(n-k+1) x_{k} c_{n-k}$
for each $k = 1, 2, \ldots , n$.
We shall denote these situation by
\begin{gather*}
c_{n}
\stackrel{\text{{\scriptsize $\widetilde{w}_{n,k} x_{k}$}}}{\to}
c_{n-k},
\end{gather*}
respectively (note that the multiplication by the $x_{*}$'s must sit
just {\it left} to the next $c_{*}$'s),
where
$\widetilde{w}_{n,k} := ((n-k)+1)$.

\item[(b)] The term $x_{0}$,
to which we can not apply~(\ref{Taylor_ev})
anymore.
This means,
consider the situation that
we apply~(\ref{Taylor_ev}) iteratively
to $c_{*}$'s which appeared at a previous stage.
Suppose we have the term~$c_{n}$ at some stage.
Then chasing the term multiplied by~$x_{*}$ which arose from the first term on the right-hand side
in~(\ref{Taylor_ev}),
lets us to get out of
the loop of iterations;
we shall symbolise this situation by
\begin{equation*}
c_{n}
\stackrel{\text{{\scriptsize $x_{n}$}}}{\rightrightarrows}
\text{end}.
\end{equation*}
\end{itemize}
Let $p \in \mathbb{N}$ be such that
$1 \leqslant p \leqslant n$.
We fix
$i_{1}, \ldots , i_{p} \in \mathbb{N}$,
so that
$i_{1} + \cdots + i_{p} = n$.
This data permits one to get out of
the loop of iterations of~(\ref{Taylor_ev})
as the following diagram shows:
\begin{gather*}\begin{split}
& \xymatrix@=20pt{
	c_{n}
	\ar[r]^-{ \text{{\scriptsize $\widetilde{w}_{n,i_{1}} x_{i_{1}}$}} }
	&
	c_{n-i_{1}}
	\ar[r]^-{ \text{{\scriptsize $\widetilde{w}_{n-i_{1},i_{2}} x_{i_{2}}$}} }
	&
	c_{n-i_{1}-i_{2}}
	\ar[rrr]^-{ \text{{\scriptsize $\widetilde{w}_{n-i_{1}-i_{2}, i_{3}} x_{i_{3}}$}} }
	&
	&
	&
	\cdots
	\ar[rrr]^-{ \text{{\scriptsize $\widetilde{w}_{n-(i_{1}+\cdots +i_{p-2}), i_{p-1}} x_{i_{p-1}}$}} }
	&
	&
	&
	c_{n-(i_{1}+i_{2}+\cdots + i_{p-1})}&}\\
& \xymatrix@=20pt{ &&&&&&&&&&= c_{i_{p}}
	\ar@<0.5ex>[r]^-{ \text{{\scriptsize $x_{i_{p}}$}} }
	\ar@<-0.5ex>[r]
	&
	\text{end}.
}\end{split}
\end{gather*}
Hence we have a single path
from $c_{n}$ to the `$\text{end}$'
in the above diagram.
This path produces at the `$\text{end}$' the term
\begin{equation*}
\widetilde{w}(n)_{i_{1},\ldots ,i_{p}}
x_{i_{p}} x_{i_{p-1}} \cdots x_{i_{2}} x_{i_{1}},
\end{equation*}
where, by using the relation
$
\widetilde{w}_{n-k,l} = \widetilde{w}_{n, k+l}
$,
the coefficient
$
\widetilde{w}(n)_{i_{1},\ldots ,i_{p}}
$
is given by
\begin{align*}
\widetilde{w}(n)_{i_{1},\ldots ,i_{p}}
& =
\widetilde{w}_{n,i_{1}}
\widetilde{w}_{n-i_{1}, i_{2}}
\widetilde{w}_{n-i_{1}-i_{2}, i_{3}}
\cdots
\widetilde{w}_{n-(i_{1}+i_{2}+ \cdots + i_{p-2}), i_{p-1}} \\
& =
\widetilde{w}_{n, i_{1}}
\widetilde{w}_{n, i_{1}+i_{2}}
\widetilde{w}_{n, i_{1}+i_{2}+i_{3}}
\cdots
\widetilde{w}_{n, i_{1}+i_{2}+ \cdots + i_{p-2} + i_{p-1}} \\
& =
\big\{ (n-i_{1}) + 1 \big\}
\big\{ (n-(i_{1}+i_{2})) + 1 \big\}
\cdots
\big\{ (n-(i_{1}+i_{2}+ \cdots +i_{p-1})) + 1 \big\} .
\end{align*}
Collecting all possibilities, we have
\begin{equation*}
c_{n}
=
\sum_{p=1}^{n}
\sum_{\substack{
	i_{1}, \ldots , i_{p} \in \mathbb{N}\colon \\
	i_{1} + \cdots + i_{p} = n
}}
\widetilde{w}(n)_{i_{1},\ldots ,i_{p}}
x_{i_{p}} x_{i_{p-1}} \cdots x_{i_{2}} x_{i_{1}},
\end{equation*}
which yields the result by reinterpreting it in the language of paths
$x_{k}(t)$'s, as claimed.

\subsection{Proof of Proposition~\ref{explicit-Grunsky}}\label{app:Grunsky-explicit}

By applying variation of constants to (\ref{Grunsky_evol}), we have
\begin{gather*}
b_{-m,-n} (t)=
-\mathrm{e}^{ (n+m) x_{0}(t) }
\int_{0}^{t}
\mathrm{e}^{ -(n+m) x_{0}(t) }
\mathrm{d} x_{n+m} (s) \\
\hphantom{b_{-m,-n} (t)=}{} +
\mathrm{e}^{ (n+m) x_{0}(t) }
\int_{0}^{t}
\big\{
	(n-1) b_{-m,-(n-1)} (s)
	\mathrm{d} x_{1} (s)
	+ \cdots
	+
	b_{-m,-1} (s)
	\mathrm{d} x_{n-1} (s)
\big\} \\
\hphantom{b_{-m,-n} (t)=}{}+
\mathrm{e}^{ (n+m) x_{0}(t) }
\int_{0}^{t}
\big\{
	(m-1) b_{-(m-1),-n} (s)
	\mathrm{d} x_{1} (s)
	+ \cdots
	+
	b_{-1,-n} (s)
	\mathrm{d} x_{m-1} (s)
\big\} .
\end{gather*}
By assuming
that $x_{1}, x_{2}, \ldots$
are {\it non-commutative} indeterminates, and the $b_{-m,-n}$'s polynomials in the $x_{i}$'s, we shall consider the following equation:
\begin{gather}
b_{-m,-n}
=
- x_{n+m}
+
\big\{
	(n-1) b_{-m,-(n-1)}
	x_{1}
	+ \cdots
	+ 2 b_{-m,-2} x_{n-2}
	+ b_{-m,-1} x_{n-1}
\big\} \nonumber\\
\hphantom{b_{-m,-n}=}{}+
\big\{
	(m-1) b_{-(m-1),-n} x_{1}
	+ \cdots
	+ 2 b_{-2,-n} x_{m-2}
	+ b_{-1,-n} x_{m-1}
\big\},\label{Grunsky_evol3}
\end{gather}
(roughly speaking,
the polynomial $b_{-m,-n}$
means
$
\mathrm{e}^{-(m+n)x_{0}(t)}b_{-m,-n}(t)
$
and
`applying the indeterminate $x_{k}$ from the right'
means `applying
$\int_{0}^{t} \mathrm{e}^{-kx_{0}(s)} \mathrm{d} x_{k}(s) \times $
to functions of~$s$'). If we apply~(\ref{Grunsky_evol3}) to $b_{-m,-n}$, we get:
\begin{itemize}\itemsep=0pt
\item[(a)]
The terms
$(n-k) b_{-m,-(n-k)} x_{k}$
and
$(m-k) b_{-(m-k),-n} x_{k}$
for each $k$.
We shall denote these cases by
\begin{equation*}
b_{-m,-n}
\stackrel{\text{{\tiny $(n- k ) x_{k} \times$}}}{\to}
b_{-m,-(n-k)}
\qquad
\text{and}
\qquad
\begin{matrix}
b_{-m,-n} \\
\text{{\tiny $(m- k ) x_{k} \times$}}
\downarrow \phantom{\text{{\tiny $(m-1) x_{k} \times$}}}
\\
b_{-(m-k),-n}
\end{matrix}\!\!\!\!\!\!\!\!\!\!,
\end{equation*}
respectively
(Note that the multiplication by the $x_{*}$'s must sit
just right to the next $b_{*,*}$'s).

\item[(b)]
The term $-x_{n+m}$,
to which we can not apply
(\ref{Grunsky_evol3})
anymore.
This means,
consider the situation that
we apply (\ref{Grunsky_evol3}) iteratively
to the $b_{*,*}$'s which appeared at a previous stage.
Suppose that we have the term $b_{-m,-n}$ at some stage.
Then chasing the term, multiplied by
$-x_{*}$, which arose from the first term on the right-hand side
in (\ref{Grunsky_evol3}),
permits us to get out of
the loop of iterations.
We shall denote this situation by
\begin{equation*}
b_{-m,-n}
\stackrel{\text{{\tiny $-x_{n+m} \times$}}}{\rightrightarrows}
\text{end}
\qquad
\text{or}
\qquad
\begin{matrix}
b_{-m,-n} \\
\text{{\tiny $-x_{n+m} \times$}}
\downdownarrows \phantom{\text{{\tiny $-x_{n+m} \times$}}}
\\
\text{end}
\end{matrix}\!\!\!\!\!\!\!\!\!\!\! .
\end{equation*}
Note that the multiplication by the $x_{*}$'s must be from the left.
Hence in particular,
to get the term of the form $x_{k}(\cdots)$
in the polynomial expression of $b_{-m,-n}$
in the~$x_{i}$'s,
we have to escape the loop by passing to the cases
\begin{equation*}
b_{-i,-j}
\stackrel{\text{{\tiny $-x_{k} \times$}}}{\rightrightarrows}
\text{end}
\qquad
\text{or}
\qquad
\begin{matrix}
b_{-i,-j} \\
\text{{\tiny $-x_{k} \times$}}
\downdownarrows
\phantom{\text{{\tiny $-x_{k} \times$}}} \\
\text{end}
\end{matrix}\!\!\!\!\!\!\!,
\end{equation*}
where $i,j \in \mathbb{N}$ with $i+j = k$.

\item[(c)]
If we have $b_{-1,-1}$, applying~(\ref{Grunsky_evol3})
does not produce $b_{*,*}$'s.
Namely we must have
\begin{equation*}
b_{-1,-1}
\stackrel{\text{{\tiny $- x_{2} \times$}}}{\rightrightarrows}
\text{end}
\qquad
\text{or}
\qquad
\begin{matrix}
b_{-1,-1} \\
\text{{\tiny $- x_{2} \times$}}
\downdownarrows
\phantom{\text{{\tiny $- x_{2} \times$}}} \\
\text{end}
\end{matrix}\!\!\!\!\!\!\! .
\end{equation*}
Again, the multiplication by $x_{2}$ must be from the left.
In particular, $b_{-m,-n}$ does not contain the term
$x_{1}(\cdots)$
and hence
$b_{-m,-n}$ is a linear combination of
$x_{k}(\cdots)$'s for $k\geqslant 2$,
though the factor $(\cdots)$ may involve $x_{1}$.
\end{itemize}
Let $k \in \mathbb{N}$ be such that
$2 \leqslant k \leqslant n+m$.
We shall find the term of the form $x_{k}(\cdots)$
in the polynomial expression of $b_{-m,-n}$
in the~$x_{i}$'s.
For this, we shall fix
$i \in \{ 1, \ldots , m \}$
and
$j \in \{ 1, \ldots , n \}$
such that
$i+j = k$.
Suppose that
$p,q \in \mathbb{N}$
and
$
i_{1}, \ldots , i_{p},
j_{1}, \ldots , j_{q} \in \mathbb{N}
$
satisfy
$
i_{1} + \cdots + i_{p} = m - i
$
and
$
j_{1} + \cdots + j_{q} = n - j
$.
We then put
$a_{r} := m - ( i_{1} + \cdots + i_{r}) $
for $r=1, \ldots , p$
and
$c_{s} := n - ( j_{1} + \cdots + j_{s} ) $
for $s=1, \ldots , q$.
Note that
$a_{p} = i$
and
$c_{q} = j$.
According to this notation,
we distinguish the following
three cases:

(1)
If there exist such $p$, $q$,
$(i_{1}, \ldots , i_{p})$
and
$(j_{1}, \ldots , j_{q})$,
then we can consider the following diagram:
\begin{equation*}
\xymatrix@=25pt{
	b_{-m,-n}
	\ar[d]_{ a_{1} x_{i_{1}} \times }
	\ar[r]^-{ c_{1} x_{j_{1}} \times }
	&
	b_{-m,-c_{1}}
	\ar[d]_{ a_{1} x_{i_{1}} \times }
	\ar[r]^-{ c_{2} x_{j_{2}} \times }
	&
	\cdots
	\ar[r]^-{ c_{q} x_{j_{q}} \times }
	&
	b_{-m,-c_{q}} = b_{-m,-j}
	\ar[d]_{ a_{1} x_{i_{1}} \times }
	& \\
	b_{-a_{1},-n}
	\ar[d]_{ a_{2} x_{i_{2}} \times }
	\ar[r]^-{ c_{1} x_{j_{1}} \times }
	&
	b_{-a_{1},-c_{1}}
	\ar[d]_{ a_{2} x_{i_{2}} \times }
	\ar[r]^-{ c_{2} x_{j_{2}} \times }
	&
	\cdots
	\ar[r]^-{ c_{q} x_{j_{q}} \times }
	&
	b_{-a_{1},-c_{q}} = b_{-a_{1},-j}
	\ar[d]_{ a_{2} x_{i_{2}} \times }
	& \\
	\vdots
	\ar[d]_{ a_{p} x_{i_{p}} \times }
	&
	\vdots
	\ar[d]_{ a_{p} x_{i_{p}} \times }
	&
	\mbox{}
	&
	\vdots
	\ar[d]_{ a_{p} x_{i_{p}} \times }
	& \\
	b_{-i,-n}
	\ar[r]^-{ c_{1} x_{j_{1}} \times }
	&
	b_{-i,-c_{1}}
	\ar[r]^-{ c_{2} x_{j_{2}} \times }
	&
	\cdots
	\ar[r]^-{ c_{q} x_{j_{q}} \times }
	&
	b_{-i,-c_{q}} = b_{-i,-j}
	\ar@<0.5ex>[rd]^{ (-1) x_{k} \times }
	\ar@<-0.5ex>[rd]
	& \\
	&
	&
	&
	&
	\text{end}
}
\end{equation*}
During the loop of iterations of (\ref{Grunsky_evol3}),
we have
$
\binom{p+q}{p}
=
\binom{p+q}{q}
$-paths
from $b_{-m,-n}$ to the `$\text{end}$'
in the above diagram,
each of which produces terms
\begin{equation*}
\text{$
- w_{i_{1},\ldots ,i_{p}; j_{1}, \ldots , j_{q}}
x_{k}(\cdots)
$'s},
\end{equation*}
where
\begin{gather*}
w_{i_{1},\ldots ,i_{p}; j_{1}, \ldots , j_{q}} =
a_{1} a_{2} \cdots a_{p}
b_{1} b_{2} \cdots b_{q} \\
\hphantom{w_{i_{1},\ldots ,i_{p}; j_{1}, \ldots , j_{q}}}{} =
( m - i_{1} )
( m - (i_{1}+i_{2}) )
\cdots
( m - (i_{1}+i_{2}+\cdots +i_{p}) ) \\
\hphantom{w_{i_{1},\ldots ,i_{p}; j_{1}, \ldots , j_{q}}=}{} \times
( n - j_{1} )
( n - (j_{1}+j_{2}) )
\cdots
( n - (j_{1}+j_{2}+\cdots +j_{q}) ),
\end{gather*}
(note that $w_{i_{1},\ldots ,i_{p}; j_{1}, \ldots , j_{q}}$
depends only on
$i_{1},\ldots ,i_{p}$ and $j_{1}, \ldots , j_{q}$
but not on the choice of paths in the diagram)
and $(\cdots)$ is a monomial consisting of
$x_{i_{p}}, x_{i_{p-1}}, \ldots , x_{i_{1}}$
and $x_{j_{q}}, x_{j_{q-1}}, \ldots , x_{j_{1}}$,
which is interlacing according to a riffle shuffle permutation
(note that we should distinguish,
for example
$x_{i_{1}}x_{j_{1}}$
and $x_{j_{1}}x_{i_{1}}$
even if $i_{1} = j_{1}$).
Hence, in total all paths produce
\begin{equation*}
- w_{i_{1}, \ldots , i_{p}; j_{1}, \ldots , j_{q}}
x_{k}
\big(
	( x_{i_{p}} x_{i_{p-1}} \cdots x_{i_{1}} )
	\shuffle
	( x_{j_{q}} x_{j_{q-1}} \cdots x_{j_{1}} )
\big) .
\end{equation*}

(2)
If there exist such a $p$ and
$(i_{1}, \ldots , i_{p})$
but not for
$q$ and $(j_{1}, \ldots , j_{q})$
(then we have $j=n$),
then the diagram which we can have is the following:
\begin{equation*}
\xymatrix@=20pt{
	b_{-m,-n}
	\ar[d]_{ a_{1} x_{i_{1}} \times } \\
	b_{-a_{1},-n}
	\ar[d]_{ a_{2} x_{i_{2}} \times } \\
	\vdots
	\ar[d]_{ a_{p} x_{i_{p}} \times } \\
	b_{-i,-n} = b_{-i,-j}
	\ar@<0.5ex>[d]
	\ar@<-0.5ex>[d]_{ (-1) x_{k} \times } \\
	\text{end}
}
\end{equation*}
Hence we have a single path
from $b_{-m,-n}$ to the `$\text{end}$'
in the above diagram.
This path produces the term
\begin{equation*}
- w_{i_{1},\ldots ,i_{p}}
x_{k}( x_{i_{p}} \cdots x_{i_{2}} x_{i_{1}} ),
\end{equation*}
where
$w_{i_{1},\ldots ,i_{p}; j_{1}, \ldots , j_{q}}
=
a_{1} a_{2} \cdots a_{p}
=
( m - i_{1} )
( m - (i_{1}+i_{2}) )
\cdots
( m - (i_{1}+i_{2}+\cdots +i_{p}) )
$.

(3) If there exist such a $q$ and
$(j_{1}, \ldots , j_{q})$
but not for
$p$ and $(i_{1}, \ldots , i_{p})$
(then we have $i=m$),
then the diagram which we can have is the following:
\begin{equation*}
\xymatrix@=20pt{
	b_{-m,-n}
	\ar[r]^-{ c_{1} x_{j_{1}} \times }
	&
	b_{-m,-c_{1}}
	\ar[r]^-{ c_{2} x_{j_{2}} \times }
	&
	\cdots
	\ar[r]^-{ c_{q} x_{j_{q}} \times }
	&
	b_{-m,-c_{q}} = b_{-i,-j}
	\ar@<0.5ex>[r]^-{ (-1) x_{k} \times }
	\ar@<-0.5ex>[r]
	&
	\text{end}.
}
\end{equation*}
Hence we have a single path
from $b_{-m,-n}$ to the `$\text{end}$'
in the above diagram.
This path produces the term
\begin{equation*}
- w_{j_{1},\ldots ,j_{q}}
x_{k}( x_{j_{q}} \cdots x_{j_{2}} x_{j_{1}} ),
\end{equation*}
where
\begin{gather*}
w_{j_{1}, \ldots , j_{q}}=c_{1} c_{2} \cdots c_{q} =
( n - j_{1} )
( n - (j_{1}+j_{2}) )
\cdots
( n - (j_{1}+j_{2}+\cdots +j_{q}) ) .
\end{gather*}
Now by reinterpreting it in the language of paths $x_{k}(t)$'s, we obtain the result.

\subsection{Proof of Theorem~\ref{Tau-by-Witt}}\label{app:tau-witt}

Since $\{ u^{n} \}_{n \geqslant 1}$ forms
a basis of $H_{+}$, it is enough to show that
\begin{gather}
n
\left[
\int
\underset{\substack{w=0, \\ u=0}}{\operatorname{Res\ }}
\left(
	\frac{ u^{n-1} }{ w-z }
	\sum_{r,s=1}^{\infty}
	\mathrm{e}^{ (r+s) x_{0} (t) }
	x_{r+s}
	\big( w^{-r}._{w} S\big( \xi \big( \mathbf{x} , \mathrm{e}^{x_{0}(t)} \big) \big) \big)
	\shuffle
	\big( u^{-s}._{u} S\big( \xi \big( \mathbf{x} , \mathrm{e}^{x_{0}(t)} \big) \big) \big)
\right)
\right]_{t} \nonumber\\
\qquad{} =
n
\sum_{m=1}^{\infty} b_{-n,-m} (t) z^{-m},\label{reduction1}
\end{gather}
where $b_{-n,-m}(t)$ are the Grunsky coefficients
associated with $f_{t}$.

According to the decomposition
\begin{equation*}
S ( \xi ( \mathbf{x} , \mathrm{e}^{x_{0}(t)} ) )
=
1
+
\sum_{ m^{\prime} = 1 }^{\infty}
\mathrm{e}^{ m^{\prime} x_{0}(t) }
\sum_{p=1}^{ m^{\prime} }
\sum_{
	\substack{
		i_{1}, \ldots , i_{p} \in \mathbb{N}: \\
		i_{1} + \cdots + i_{p} = m^{\prime}
	}
}
\!\!\!\!\!\!\!\!\!x_{i_{1}} x_{i_{2}} \cdots x_{i_{p}},
\end{equation*}
we have
\begin{gather*}
\big(
	w^{-r} ._{w} S\big( \xi \big( \mathbf{x} , \mathrm{e}^{x_{0}(t)} \big) \big)
\big)
\shuffle
\big(
	u^{-s} ._{u} S\big( \xi \big( \mathbf{x}, \mathrm{e}^{x_{0}(t)} \big) \big)
\big) \\
\qquad{} =
\big( w^{-r} ._{w} 1 \big)
\shuffle
\big( u^{-s} ._{u} 1 \big)
+
F_{r,s} (w,u)
+
G_{r,s} (w,u)
+
H_{r,s} (w,u),
\end{gather*}
where
\begin{gather*}
F_{r,s} (w,u)
:=
\big[
	w^{-r} ._{w}
	\big( S\big( \xi \big( \mathbf{x} , \mathrm{e}^{x_{0}(t)} \big) \big) - 1 \big)
\big]
\shuffle
\big[
	u^{-s} ._{u}
	\big( S\big( \xi \big( \mathbf{x}, \mathrm{e}^{x_{0}(t)} \big) \big) - 1 \big)
\big] \\
\hphantom{F_{r,s} (w,u)}{} =
\sum_{m^{\prime} =1}^{\infty}
\sum_{n^{\prime} =1}^{\infty}
\sum_{p=1}^{m^{\prime}}
\sum_{q=1}^{n^{\prime}}
\sum_{
	\substack{
		i_{1}, \ldots , i_{p} \in \mathbb{N}\colon \\
		i_{1} + \cdots + i_{p} = m^{\prime}
	}
}
\sum_{
	\substack{
		j_{1}, \ldots , j_{q} \in \mathbb{N}\colon \\
		j_{1} + \cdots + j_{q} = n^{\prime}
	}
}
\mathrm{e}^{ ( m^{\prime} + n^{\prime} ) x_{0}(t)} \\
\hphantom{F_{r,s} (w,u)=}{} \times
\big( w^{-r} ._{w} x_{i_{1}} x_{i_{2}} \cdots x_{i_{p}} \big)
\shuffle
\big( u^{-s} ._{u} x_{j_{1}} x_{j_{2}} \cdots x_{j_{q}} \big), \\
G_{r,s} (w,u):=
\big( w^{-r} ._{w} 1 \big)
\shuffle
\big[
	u^{-s} ._{u}
	\big( S\big( \xi \big( \mathbf{x}, \mathrm{e}^{x_{0}(t)} \big) \big) - 1 \big)
\big] \\
\hphantom{G_{r,s} (w,u)}{} =
\sum_{n^{\prime}=1}^{\infty}
\mathrm{e}^{ n^{\prime} x_{0} (t) }
\sum_{q=1}^{n^{\prime}}
\sum_{
	\substack{
		j_{1}, \ldots , j_{q} \in \mathbb{N}\colon \\
		j_{1} + \cdots + j_{q} = n^{\prime}
	}
}
\big( w^{-r} ._{w} 1 \big)
\shuffle
\big( u^{-s} ._{u} x_{j_{1}} x_{j_{2}} \cdots x_{j_{q}} \big), \\
H_{r,s} (w,u):=
\big[
	w^{-r} ._{w}
	\big( S\big( \xi \big( \mathbf{x}, \mathrm{e}^{ x_{0} (t) } \big) \big) - 1 \big)
\big]
\shuffle
\big( u^{-s} ._{u} 1 \big) \\
\hphantom{H_{r,s} (w,u)}{} =
\sum_{m^{\prime} =1}^{\infty}
\mathrm{e}^{ m^{\prime} x_{0} (t) }
\sum_{p=1}^{m^{\prime}}
\sum_{
	\substack{
		i_{1}, \ldots , i_{p} \in \mathbb{N}\colon \\
		i_{1} + \cdots + i_{p} = m^{\prime}
	}
}
\big( w^{-r} ._{w} x_{i_{1}} x_{i_{2}} \cdots x_{i_{p}} \big)
\shuffle
\big( u^{-s} ._{u} 1 \big) .
\end{gather*}
Since
$w^{-r} ._{w} 1 = w^{-r}$,
we get
$
\big( w^{-r} ._{w} 1 \big) \shuffle \big( u^{-s} ._{u} 1 \big)=w^{-r} u^{-s}$.
Then, by using
\begin{equation*}
\frac{ 1 }{ w-z }=-\sum_{m=1}^{\infty}z^{-m} w^{m-1}
\qquad
\text{for $\vert z \vert > \vert w \vert$,}
\end{equation*}
we have
\begin{gather*}
\underset{\substack{w=0; \\ u=0}}{\operatorname{Res\ }}
\left(
\frac{ u^{n-1} }{ w-z }
\sum_{r,s=1}^{\infty}
\mathrm{e}^{ (r+s) x_{0} (t) }
x_{r+s}
\big(
	\big( w^{-r} ._{w} 1 \big)
	\shuffle
	\big( u^{-s} ._{u} 1 \big)
\big)
\right) \\
\qquad{} = -
\underset{\substack{w=0; \\ u=0}}{\operatorname{Res\ }}
\left(
u^{n-1}
\sum_{m=1}^{\infty}
z^{-m} w^{m-1}
\sum_{r,s=1}^{\infty}
\mathrm{e}^{ (r+s) x_{0} (t) }
x_{r+s}
w^{-r} u^{-s}
\right)\\
\qquad{} =
-
\sum_{m=1}^{\infty}
z^{-m}
\mathrm{e}^{ (m+n) x_{0} (t) }
x_{m+n} .
\end{gather*}
Let
\begin{gather*}
w(r)_{i_{1},\ldots , i_{p}; \varnothing}
:=
r
( i_{p} + r )
( i_{p} + i_{p-1} + r )
\cdots
( i_{p} + i_{p-1} + \cdots + i_{2} + r ) \\
\qquad{} =
\big( m - ( i_{1} + \cdots + i_{p} ) \big)
\cdots
\big( m - ( i_{1} + i_{2} ) \big)
( m - i_{1} ),
\end{gather*}
where $m = i_{1} + \cdots + i_{p} + r$,
\begin{gather*}
w(s)_{\varnothing ; j_{1},\ldots , j_{q}}
 :=
s
( j_{q} + s )
( j_{q} + j_{q-1} + s )
\cdots
( j_{q} + j_{q-1} + \cdots + j_{2} + s ) \\
\qquad{} =
\big( n - ( j_{1} + \cdots + j_{q} ) \big)
\cdots
\big( n - ( j_{1} + j_{2} ) \big)
( n - j_{1} ),
\end{gather*}
where $n = j_{1} + \cdots + j_{q} + s$,
and
\[
w(r,s)_{ i_{1}, \ldots , i_{p} ; j_{1}, \ldots , j_{q} }
:=
w(r)_{i_{1},\ldots , i_{p}; \varnothing}
w(s)_{\varnothing ; j_{1},\ldots , j_{q}}.
\]

For $F_{r+s} (w,u)$, we first observe that
\begin{gather*}
w^{-r}._{w} x_{i_{p}} \cdots x_{i_{2}} x_{i_{1}}
=x_{i_{p}} \cdots x_{i_{2}} x_{i_{1}}
L_{-i_{1}}
L_{-i_{2}}
\cdots
L_{-i_{p}}
w^{-r} \\
\hphantom{w^{-r}._{w} x_{i_{p}} \cdots x_{i_{2}} x_{i_{1}} }{} =
w(r)_{ i_{1}, \ldots , i_{p}; \varnothing }
x_{i_{p}} \cdots x_{i_{2}} x_{i_{1}}
w^{ - ( i_{1} + i_{2} + \cdots + i_{p} + r ) },
\end{gather*}
and similarly
\begin{gather*}
u^{-s}._{u} x_{j_{q}} \cdots x_{j_{2}} x_{j_{1}}
=
w(s)_{ \varnothing ; j_{1}, \ldots , j_{q} }
x_{j_{q}} \cdots x_{j_{2}} x_{j_{1}}
u^{ - ( j_{1} + j_{2} + \cdots + j_{p} + s ) }.
\end{gather*}
Therefore we have
\begin{gather*}
\underset{\substack{w=0 ; \\ u=0}}{\operatorname{Res\ }}
\left(
\frac{ u^{n-1} }{ w - z }
x_{r+s}
\big(
	( w^{-r}._{w} x_{i_{1}} x_{i_{2}} \cdots x_{i_{p}} )
	\shuffle
	( u^{-s}._{u} x_{i_{1}} x_{i_{2}} \cdots x_{i_{p}} )
\big)
\right) \\
\qquad{} =
-
1_{ \{ 1 \leqslant n-s = j_{1} + \cdots + j_{q} \} }
\sum_{m=1}^{\infty}
z^{-m}
1_{ \{ 1 \leqslant m-r = i_{1} + \cdots + i_{p} \} } \\
\qquad\quad{} \times
w(r,s)_{ i_{1}, \ldots , i_{p} ; j_{1}, \ldots , j_{q} }
x_{r+s}
[
	( x_{i_{p}} \cdots x_{i_{2}} x_{i_{1}} )
	\shuffle
	( x_{j_{q}} \cdots x_{j_{2}} x_{j_{1}} )
] ,
\end{gather*}
so that
\begin{gather*}
\underset{\substack{w=0 ; \\ u=0}}{\operatorname{Res\ }}
\left(
\frac{ u^{n-1} }{ w - z }
x_{r+s}
F_{r,s} (w,u)
\right) \\
{} =
-\sum_{m=1}^{\infty}
z^{-m}
\sum_{m^{\prime} =1}^{\infty}
\sum_{n^{\prime} =1}^{\infty}
\mathrm{e}^{ (m^{\prime} + n^{\prime}) x_{0} (t) }
\sum_{p=1}^{m^{\prime}}
\sum_{q=1}^{n^{\prime}}
\sum_{
	\substack{
		i_{1}, \ldots , i_{p} \in \mathbb{N}: \\
		i_{1} + \cdots + i_{p} = m^{\prime}
	}
}
\!\!\!\!\!\!\!1_{ \{ 1 \leqslant m-r = i_{1} + \cdots + i_{p} \} }\cdots\\
\quad{} \cdots\sum_{
	\substack{
		j_{1}, \ldots , j_{q} \in \mathbb{N}\colon \\
		j_{1} + \cdots + j_{q} = n^{\prime}
	}
}
\!\!\!1_{ \{ 1 \leqslant n-s = j_{1} + \cdots + j_{q} \} }
w(r,s)_{ i_{1}, \ldots , i_{p} ; j_{1}, \ldots , j_{q} }
x_{r+s}
[
	( x_{i_{p}} \cdots x_{i_{2}} x_{i_{1}} )
	\shuffle
	( x_{j_{q}} \cdots x_{j_{2}} x_{j_{1}} )
] \\
{} =
-1_{ \{ 1 \leqslant n-s \} }
\sum_{m=1}^{\infty}
z^{-m}
\mathrm{e}^{ ( (m-r) + (n-s) ) x_{0} (t) }
1_{ \{ 1 \leqslant m-r \} }
\sum_{p=1}^{m-r}
\sum_{q=1}^{n-s}
\sum_{
	\substack{
		i_{1}, \ldots , i_{p} \in \mathbb{N}: \\
		i_{1} + \cdots + i_{p} = m-r
	}
}\cdots\\
\quad{} \cdots\sum_{
	\substack{
		j_{1}, \ldots , j_{q} \in \mathbb{N}:\\
		j_{1} + \cdots + j_{q} = n-s
	}
}
\!\!\!\!\!\!\!\!\!w(r,s)_{ i_{1}, \ldots , i_{p} ; j_{1}, \ldots , j_{q} }
x_{r+s}
[
	( x_{i_{p}} \cdots x_{i_{2}} x_{i_{1}} )
	\shuffle
	( x_{j_{q}} \cdots x_{j_{2}} x_{j_{1}} )
] .
\end{gather*}
Hence we have reached
\begin{gather*}
\underset{\substack{w=0 ; \\ u=0}}{\operatorname{Res\ }}
\left(
\frac{ u^{n-1} }{ w - z }
\sum_{r,s =1}^{\infty}
\mathrm{e}^{ (r+s) x_{0} (t) }
x_{r+s}
F_{r,s} (w,u)
\right) \\
\qquad{} =
-
\sum_{m=1}^{\infty}
z^{-m}
\mathrm{e}^{ (m+n) x_{0} (t) }
\sum_{k=2}^{m+n-2}
\sum_{\substack{ 1 \leqslant r < m; \\ 1 \leqslant s < n\colon \\ r+s = k}}
\sum_{p=1}^{m-r}
\sum_{q=1}^{n-s}
\sum_{
	\substack{
		i_{1}, \ldots , i_{p} \in \mathbb{N}\colon \\
		i_{1} + \cdots + i_{p} = m-r
	}
}\cdots\\
\qquad\quad{} \cdots
\sum_{
	\substack{
		j_{1}, \ldots , j_{q} \in \mathbb{N}: \\
		j_{1} + \cdots + j_{q} = n-s
	}
}
\!\!\!\!\!\!\!\!\!w(r,s)_{ i_{1}, \ldots , i_{p} ; j_{1}, \ldots , j_{q} }
x_{k}
[
	( x_{i_{p}} \cdots x_{i_{2}} x_{i_{1}} )
	\shuffle
	( x_{j_{q}} \cdots x_{j_{2}} x_{j_{1}} )
] .
\end{gather*}
Similarly, we find that
\begin{gather*}
\underset{\substack{w=0 ; \\ u=0}}{\operatorname{Res\ }}
\left(
\frac{ u^{n-1} }{ w - z }
\sum_{r,s =1}^{\infty}
\mathrm{e}^{ (r+s) x_{0} (t) }
x_{r+s}
G_{r,s} (w,u)
\right) \\
\qquad{} = -
\sum_{m=1}^{\infty}
z^{-m}
\sum_{r,s=1}^{\infty}
\mathrm{e}^{ (r+s) x_{0} (t) }
\sum_{n^{\prime}=1}^{\infty}
\mathrm{e}^{ n^{\prime} x_{0} (t) }
\sum_{q=1}^{n^{\prime}}\cdots\\
\qquad\quad{} \cdots\sum_{
	\substack{
		j_{1}, \ldots , j_{q} \in \mathbb{N}\colon \\
		j_{1} + \cdots + j_{q} = n^{\prime}
	}
}
1_{\{ 1 \leqslant n-s = j_{1} + \cdots + j_{q} \}}
1_{\{ m=r \}}
w(s)_{ \varnothing ; j_{1}, \ldots , j_{q} }
x_{ r+s } ( x_{j_{q}} \cdots x_{j_{1}} ) \\
\qquad{} =
-\sum_{m=1}^{\infty}
z^{-m}
\mathrm{e}^{ (m+n) x_{0} (t) }
\sum_{k=m+1}^{m+n-1}
\sum_{q=1}^{n+m-k}
\!\!\!\sum_{
	\substack{
		j_{1}, \ldots , j_{q} \in \mathbb{N}: \\
		j_{1} + \cdots + j_{q} = n+m-k
	}
}
\!\!\!\!\!\!\!\!\!\!\!\!w( k-m )_{ \varnothing ; j_{1}, \ldots , j_{q} }
x_{ k } ( x_{j_{q}} \cdots x_{j_{1}} ),
\end{gather*}
and
\begin{gather*}
\underset{\substack{w=0 ; \\ u=0}}{\operatorname{Res\ }}
\left(
\frac{ u^{n-1} }{ w - z }
\sum_{r,s =1}^{\infty}
\mathrm{e}^{ (r+s) x_{0} (t) }
x_{r+s}
H_{r,s} (w,u)
\right) \\
\qquad{} =
-\sum_{m=1}^{\infty}
z^{-m}
\mathrm{e}^{ (m+n) x_{0} (t) }
\sum_{k=n+1}^{m+n-1}
\sum_{p=1}^{n+m-k}
\!\!\!\sum_{
	\substack{
		i_{1}, \ldots , i_{p} \in \mathbb{N}: \\
		i_{1} + \cdots + i_{p} = m+n-k
	}
}
\!\!\!\!\!\!\!\!\!w(k-n)_{ i_{1}, \ldots , i_{p} ; \varnothing }
x_{ k } ( x_{i_{p}} \cdots x_{i_{1}} ) .
\end{gather*}
Now, in view of Theorem \ref{Grunsky-formula}, we obtain (\ref{reduction1}), and hence the result.

\subsection*{Acknowledgements}

T.A.\ was supported by JSPS KAKENHI Grant Number 15K17562.
R.F.\ was previously supported by the ERC advanced grant ``Noncommutative distributions in free probability''.
Both authors thank Theo Sturm for the hospitality he offered to T.A.\ at the University of Bonn. T.A.\ thanks Roland Speicher for the hospitality offered him in Saarbr\"ucken.
R.F.\ thanks Roland Spei\-cher for discussions, and Fukuoka University and the MPI in Bonn for their hospitality.
We both thank Takuya Murayama for the discussions, comments and collaboration.
We thank the anonymous referees for their comments which helped us to improve the paper.

\pdfbookmark[1]{References}{ref}
\LastPageEnding

\end{document}